\newtheorem{theorem}{Theorem}[section]
\newtheorem{lemma}[theorem]{Lemma}
\newtheorem{proposition}[theorem]{Proposition}
\newtheorem{corollary}[theorem]{Corollary}
\newtheorem{definition}[theorem]{Definition}
\newtheorem{defn}[theorem]{Definition}
\newtheorem{example}[theorem]{Example}
\newtheorem{remark}[theorem]{Remark}
\def\operatorname#1{{\mathop{\rm #1}}}
\newcommand{\ord}{\operatorname{ord}}
\def\tao{\tau_o}
\def\too{\stackrel{(o)}{\longrightarrow}}
\def\totau{\stackrel{\tao}{\longrightarrow}}
\def\comp{\leftrightarrow}
\def\tophi{\stackrel{\tau_{\Phi}}{\longrightarrow}}
\newcounter{ok}
{\end{list}}
\newcounter{aok}
{\end{list}}
\def\go#1;#2;#3 {\vbox to0pt{\kern-#3\hbox{\kern#2 #1}\vss}\nointerlineskip}
\begin{document}
\begin{frontmatter}
\title{Almost orthogonality and Hausdorff interval topologies
of atomic lattice effect algebras}

\author[Paseka]{Jan Paseka\corref{cor1}\fnref{fn1}}
\cortext[cor1]{Corresponding author}
\ead{paseka@math.muni.cz}
\fntext[fn1]{This work was supported by
by  the  Grant Agency of the Czech
Re\-pub\-lic un\-der the grant
No.~201/06/0664 and by
the  Ministry of Education of the Czech Republic
under the project MSM0021622409}
\author[Riecan]{Zdenka Rie\v{c}anov\'a\fnref{fn2}}
\ead{zdenka.riecanova@stuba.sk}
\fntext[fn1]{This work was supported by the Slovak Resaerch and Development
Agency under the contract No. APVV--0071--06 and the grant
VEGA-1/3025/06 of M\v{S}~SR.}

\address[Paseka]{Department of Mathematics
 and Statistics,
Faculty of Science,
Masaryk University,
{Kotl\'a\v r{}sk\' a\ 2},
CZ-611~37~Brno, Czech Republic}
\address[Riecan]{Department of Mathematics,
Faculty of Electrical Engineering and Information Technology,
Slovak University of Technology, Ilkovi\v{c}ova~3,
SK-812~19~Bratislava, Slovak Republic}
\address[Wu Junde]{Department of Mathematics, Zhejiang University,
Hangzhou 310027, People's Republic of China}
\author[Wu Junde]{Wu Junde}
\ead{wjd@zju.edu.cn}

\date{\today}





\begin{abstract}
We prove that the interval topology of an
Archimedean atomic lattice effect algebra $E$ is Hausdorff
whenever the set of all atoms of $E$ is almost orthogonal.
In such a case $E$
is order continuous. If moreover $E$ is complete
then order convergence of nets of elements of $E$ is topological
and hence it coincides with convergence in the
order topology and this topology is compact Hausdorff compatible
with a uniformity induced by a separating function family
on $E$ corresponding to compact and cocompact elements.
For block-finite Archimedean atomic lattice effect algebras the
equivalence of almost orthogonality and s-compact generation
is shown. As the main application we obtain the state smearing
theorem for these effect algebras, as well as the continuity
of $\oplus$-operation in the order and interval topologies
on them.
\end{abstract}


%

\begin{keyword}
Non-classical logics \sep D-posets \sep effect algebras
\sep MV-algebras \sep interval and order topology\sep
states

\PACS 02.10.De \sep 02.40.Pc

\noindent{}\MSC: 03G12\sep 06F05\sep 03G25\sep 54H12\sep 08A55
\end{keyword}
\end{frontmatter}

\section{Introduction, basic definitions and facts}

\label{intro}

In the study of effect algebras (or more general, quantum structures)
as carriers of states and probability measures, an important tool
is the study of topologies on them. We can say that topology is practically
equivalent with the concept of convergence. From the probability
point of view the convergence of nets is the main tool in spite of that
convergence of filters is easier to handle and preferred in
the modern topology. It is because states or probabilities are
mappings (functions) defined on elements but not on subsets of
quantum structures. Note also, that connections between order
convergence of filters and nets are not trivial.
For instance, if a filter order converges to
some point of a poset then the associated net need not order
converge (see e.g, \cite{ZR19}).

On the other hand certain topological properties
of studied structures characterize also their certain
algebraic properties and conversely. For instance a known fact
is that a Boolean algebra $B$ is atomic iff
the interval topology $\tau_i$ on $B$ is Hausdorff
(see \cite[Corollary 3.4]{ZR47}).
This is not more valid for lattice effect algebras (even
MV-algebras). By Frink's Theorem the interval topology
$\tau_i$ on $B$ (more generally on any lattice $L$) is
compact iff it is a complete lattice \cite{frink}.
In \cite{PR4} it was proved that if a lattice
effect algebra $E$ (more generally any basic algebra) is compactly
generated then $E$ is atomic.

We are going to prove
that on an Archimedean atomic lattice effect algebra $E$ the interval
topology $\tau_i$ is Hausdorff and $E$ is (o)-continuous if and only if
$E$  is almost orthogonal.
Moreover, if $E$ is complete then $\tau_i$ is compact
and coincides with the order topology $\tau_o$ on $E$ and
this compact topology $\tau_i=\tau_o$ is
compatible with a uniformity on $E$ induced by a separating
function family on $E$ corresponding to compact and cocompact elements
of $E$.

As the main corollary of that we obtain that every Archimedean
 atomic block-finite
lattice effect algebra $E$ has  Hausdorff interval topology and hence
both topologies $\tau_i$ and $\tau_o$ are Hausdorff
and they coincide. In this case
almost orthogonality of $E$ and s-compact generation
by finite elements of $E$ are equivalent.
As an application the state smearing
theorem for these effect algebras is formulated.
Moreover, the continuity
of $\oplus$-operation in $\tau_i$ and $\tau_o$
on them is shown.

\begin{defn}\label{def:EA}
A partial algebra $(E;\oplus,0,1)$ is called an effect algebra if
$0$, $1$ are two distinct elements and $\oplus$ is a partially
defined binary operation on $E$ which satisfy the following
conditions for any $a,b,c\in E$:
\begin{description}
\item[\rm(Ei)\phantom{ii}] $b\oplus a=a\oplus b$ if $a\oplus b$ is defined,
\item[\rm(Eii)\phantom{i}] $(a\oplus b)\oplus c=a\oplus(b\oplus c)$  if one
side is defined,
\item[\rm(Eiii)] for every $a\in E$ there exists a unique $b\in
E$ such that $a\oplus b=1$ (we put $a'=b$),
\item[\rm(Eiv)\phantom{i}] if $1\oplus a$ is defined then $a=0$.
\end{description}
\end{defn}

We often denote the effect algebra $(E;\oplus,0,1)$ briefly by
$E$. In every effect algebra $E$ we can define the partial order
$\le$ by putting

\noindent{}
$a\le b$ \mbox{ and } $b\ominus a=c$ \mbox{ iff }$a\oplus c$
\mbox{ is defined and }$a\oplus c=b$\mbox{, we set }$c=b\ominus a$\,.

If $E$ with the defined partial order is a lattice (a complete
lattice) then $(E;\oplus,0,1)$ is called a {\em lattice effect
algebra} ({\em a complete lattice effect algebra}).

Recall that a set $Q\subseteq E$ is called a {\em sub-effect algebra}
of the effect algebra $E$ if
\begin{enumerate}
\item[(i)] $1\in Q$
\item[(ii)] if out of elements $a,b,c\in E$ with $a\oplus b=c$
two are in $Q$, then $a,b,c\in Q$.
\end{enumerate}

If $Q$ is simultaneously a sublattice of $E$ then $Q$ is called
a {\em sub-lattice effect algebra} of $E$.

We say that a finite system $F=(a_k)_{k=1}^n$ of not necessarily
different elements of an effect algebra $(E;\oplus,0,1)$ is
{\em $\oplus$-orthogonal} if $a_1\oplus a_2\oplus\dots\oplus a_n$
(written $\bigoplus\limits_{k=1}^na_k$ or $\bigoplus F$) exists
in $E$. Here we define $a_1\oplus a_2\oplus\dots\oplus a_n=
(a_1\oplus a_2\oplus\dots\oplus a_{n-1})\oplus a_n$ supposing
that $\bigoplus\limits_{k=1}^{n-1}a_k$ exists and
$\bigoplus\limits_{k=1}^{n-1}a_k\le a'_n$. An arbitrary system
$G=(a_{\kappa})_{\kappa\in H}$ of not necessarily
different elements of $E$ is
{\em $\oplus$-orthogonal} if  $\bigoplus K$ exists for every finite
$K\subseteq G$. We say that for a $\oplus$-orthogonal system
$G=(a_{\kappa})_{\kappa\in H}$ the element $\bigoplus G$ exists
iff $\bigvee\{\bigoplus K\mid K\subseteq G$, $K$ is finite$\}$
exists in $E$  and then we put
$\bigoplus G=\bigvee\{\bigoplus K\mid K\subseteq G\}$
(we write $G_1\subseteq G$ iff there is $H_1\subseteq H$ such
that $G_1=(a_{\kappa})_{\kappa\in H_1}$).

Recall that elements $x$ and $y$ of a lattice effect algebra are
called {\em compatible} (written $x\leftrightarrow y$) if $x\vee
y=x\oplus(y\ominus(x\wedge y))$ \cite{Kop2}. For $x\in E$ and
$Y\subseteq E$ we write $x\leftrightarrow Y$ iff $x\leftrightarrow
y$ for all $y\in Y$. If every two elements are compatible then $E$
is called an {\em MV-effect algebra}. In fact, every MV-effect algebra
can be organized  into an MV-algebra (see \cite{C.C.Ch}) if we extend the partial
$\oplus$ to a total operation by setting $x\oplus
y=x\oplus(x'\wedge y)$ for all $x,y\in E$ (also conversely,
restricting a total $\oplus$ into partial $\oplus$ for only $x,y\in
E$ with $x\le y'$ we obtain a MV-effect algebra).

Moreover, in \cite{ZR56} it was proved that every lattice effect
algebra is a set-theoretical union of MV-effect algebras called
blocks. {\em Blocks} are maximal subsets of pairwise compatible
elements of $E$, under which every subset of pairwise compatible
elements is by Zorn's Lemma contained in a maximal one. Further,
blocks are sub-lattices and sub-effect algebras of $E$ and hence
maximal sub-MV-effect algebras of $E$. A lattice effect algebra
is called {\em block-finite} if it has only finitely many blocks.

Finally note that lattice effect algebras generalize orthomodular
lattices \cite{K} (including Boolean algebras) if we assume
existence of unsharp elements $x\in E$, meaning that $x\wedge
x'\ne0$. On the other hand the {\em set} $S(E)=\{x\in E\mid
x\wedge x'=0\}$ {\em of all sharp elements} of a lattice effect
algebra $E$ is an orthomodular lattice \cite{ZR57}. In this sense
a lattice effect algebra is a ``smeared'' orthomodular lattice,
while an MV-effect algebra is a  ``smeared'' Boolean algebra. An
orthomodular lattice $L$ can be organized into a lattice effect
algebra by setting $a\oplus b=a\vee b$ for every pair $a,b\in L$
such that $a\le b^{\perp}$.

For an element $x$ of an effect algebra $E$ we write
$\ord(x)=\infty$ if $nx=x\oplus x\oplus\dots\oplus x$ ($n$-times)
exists for every positive integer $n$ and we write $\ord(x)=n_x$
if $n_x$ is the greatest positive integer such that $n_xx$
exists in $E$.  An effect algebra $E$ is {\em Archimedean} if
$\ord(x)<\infty$ for all $x\in E$. We can show that every
complete effect algebra is Archimedean (see \cite{ZR54}).

An element $a$ of an effect algebra $E$ is an {\em atom} if $0\le
b<a$ implies $b=0$ and $E$ is called {\em atomic} if for every
nonzero element $x\in E$ there is an atom $a$ of $E$ with $a\le
x$. If $u\in E$ and either $u=0$ or $u=p_1\oplus
p_2\oplus \dots\oplus p_n$ for  some not
necessarily different atoms $p_1,p_2, \dots,p_n\in E$ then
$u\in E$ is called {\em finite\/} and $u'\in E$ is called {\em cofinite\/}.
If $E$ is a lattice effect algebra then for $x\in E$ and an
atom $a$ of $E$ we have $a\leftrightarrow x$ iff $a\le x$ or $a\le
x'$. It follows that  if $a$ is an atom of a block $M$ of $E$ then
$a$ is also an atom of $E$. On the other hand if $E$ is atomic
then, in general, every block in $E$ need not be atomic (even for
orthomodular lattices \cite{Bel&Cas}).

The following theorem is well known.

\begin{theorem}{\rm\cite[Theorem 3.3]{ZR65}}\label{Theorem3.3}
Let $(E;\oplus,0,1)$ be an Archimedean atomic lattice effect
algebra. Then to every nonzero element $x\in E$ there are
mutually distinct atoms $a_{\alpha}\in{E}$, $\alpha{\in\mathcal E}$ and
positive integers $k_{\alpha}$ such that
$$
x=\bigoplus\{k_\alpha a_\alpha\mid\alpha\in{\mathcal E}\}
=\bigvee\{k_\alpha a_\alpha\mid\alpha\in{\mathcal E}\}
$$
under which $x\in S(E)$ iff
$k_\alpha=n_{a_\alpha}=\ord(a_\alpha)$ for all $\alpha\in {\mathcal E}$.
\end{theorem}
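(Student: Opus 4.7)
The plan is to produce the decomposition by a maximality argument with Zorn's Lemma, then regroup by atom type, and finally characterize sharpness through the block (MV-subalgebra) structure of $E$.

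I would fix a nonzero $x \in E$ and use the Archimedean hypothesis to observe that for each atom $a \le x$ the set $\{n \ge 1 : na \text{ exists in } E \text{ and } na \le x\}$ is bounded above by $\ord(a) = n_a$, hence finite. Consider the family $\mathcal{P}$ of all $\oplus$-orthogonal systems $G = (b_i)_{i \in I}$ of atoms of $E$ (with repetitions allowed) such that every finite partial sum $\bigoplus_{i \in F} b_i$ lies below $x$. Partially ordered by extension, $\mathcal{P}$ is nonempty (atomicity provides some atom below $x$) and closed under chain unions since $\oplus$-orthogonality is a finitary condition, so Zorn's Lemma yields a maximal element $G^*$ indexed by some $I^*$.

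Next I would set $y := \bigvee\{\bigoplus_{i \in F} b_i : F \subseteq I^* \text{ finite}\}$ and show $y = x$. If instead $y < x$, then $x \ominus y > 0$, so atomicity yields an atom $c \le x \ominus y \le y'$; such $c$ is $\oplus$-orthogonal to every finite partial sum of $G^*$, so $G^* \cup \{c\}$ strictly extends $G^*$ in $\mathcal{P}$, contradicting maximality. Once $y = x$ is established, I would regroup $G^*$ by atom identity: for each distinct atom $a$ appearing in $G^*$, its multiplicity $k_a$ is a positive integer with $k_a \le n_a$ by Archimedean-ness, and the regrouped system $(k_\alpha a_\alpha)_{\alpha \in \mathcal{E}}$ with mutually distinct atoms is still $\oplus$-orthogonal with total sum $x$. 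The equality $\bigoplus k_\alpha a_\alpha = \bigvee k_\alpha a_\alpha$ then follows from the standard lattice-effect-algebra identity $a \oplus b = a \vee b$ whenever $a \le b'$ and $a \wedge b = 0$, together with the fact that $\oplus$-orthogonal multiples of distinct atoms have pairwise zero meet (a block-compatibility observation, since $a_\alpha \wedge a_\beta = 0$ for distinct atoms).

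For the sharpness characterization I would argue block-by-block. If $k_\alpha = n_{a_\alpha}$ for every $\alpha$, then each $n_{a_\alpha} a_\alpha$ is sharp, being the Boolean idempotent generated by $a_\alpha$ inside any MV-block containing it, and $x$, as the join in $E$ of pairwise $\oplus$-orthogonal sharp elements, then lies in the orthomodular sublattice $S(E)$. Conversely, if $k_\beta < n_{a_\beta}$ for some $\beta$, then $(k_\beta + 1) a_\beta$ still exists in $E$; working inside a block containing $a_\beta$ and $x$, the ``leftover'' piece of the atom $a_\beta$ produces a nonzero element below both $x$ and $x'$, forcing $x \notin S(E)$. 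The main technical obstacle I anticipate is consolidated in these final steps: justifying the sharpness of $n_a a$ at an atom $a$ and exhibiting a concrete witness for $x \wedge x' > 0$ when some $k_\beta$ is deficient both require delicate MV-algebra arithmetic inside a well-chosen block of $E$, coordinating a single atom's maximal multiple with the remaining orthogonal summands; the pairwise-zero-meet fact underpinning $\bigoplus = \bigvee$ rests on a similar block-and-compatibility argument.
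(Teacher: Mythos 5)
The paper itself offers no proof of this statement: it is imported verbatim as \cite[Theorem 3.3]{ZR65} and used as a known fact, so there is no internal argument to compare yours against. Judged on its own terms, your architecture --- Zorn's Lemma on $\oplus$-orthogonal systems of atoms dominated by $x$, maximality plus atomicity to identify the sum with $x$, the paper's Lemma~\ref{extfu} to turn $\bigoplus$ into $\bigvee$, and sharpness of $n_a a$ for the last clause --- is the standard route and is essentially the right one.

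There is, however, one concrete gap. $E$ is only an Archimedean lattice effect algebra, not a complete one, so the supremum $y=\bigvee\{\bigoplus_{i\in F}b_i \mid F\subseteq I^*\ \mbox{finite}\}$ that you ``set'' and then compare with $x$ is not known to exist; indeed the existence of that join is part of what the theorem asserts, since $\bigoplus G$ is \emph{defined} only when this join exists. Your dichotomy ``$y=x$ or $y<x$'' is therefore not available as written. The repair is to avoid naming $y$: show directly that $x$ is the least upper bound of the finite partial sums. It is an upper bound; and if $w$ is any upper bound, then $v=x\wedge w$ is again one, and $v<x$ would yield by atomicity an atom $c\le x\ominus v\le v'$ with $(\bigoplus_{i\in F}b_i)\oplus c\le v\oplus(x\ominus v)=x$ for every finite $F$, contradicting maximality of $G^*$; hence $v=x$, $x\le w$, and the join exists and equals $x$. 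A secondary soft spot is the forward sharpness implication: granting that each $n_{a_\alpha}a_\alpha$ is sharp, you still need that an existing $\bigoplus$ of a $\oplus$-orthogonal family of sharp elements is sharp, i.e.\ that $S(E)$ is closed under such joins in an Archimedean $E$; this is true but is itself a theorem to be cited or proved, not just ``MV-arithmetic in a block''. The converse direction is fine: if $k_\beta<n_{a_\beta}$, Lemma~\ref{extfu} shows every finite partial sum of the system lies below $a_\beta'$, whence $x\le a_\beta'$ and $0\ne a_\beta\le x\wedge x'$.
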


\begin{definition} \mbox{\rm{}(1)} An element $a$ of a lattice $L$ is called
{\em compact} iff, for any $D\subseteq L$ with $\bigvee D\in L$, if $a\leq \bigvee D$
then $a\leq \bigvee F$ for some finite $F\subseteq D$.

\mbox{\rm{}(2)} A lattice $L$ is called {\em compactly generated} iff every
element of $L$ is a join of compact elements.
\end{definition}

The notions of {\em cocompact element} and {\em cocompactly generated lattice}
can be defined dually. Note that compact elements are important in
computer science in the semantic approach called domain theory, where
they are considered as a kind of primitive elements.


\section{Characterizations of interval topologies on bounded lattices}
\label{Interval topology}

The order convergence of nets ((o)-convergence), interval topology
$\tau_i$ and order-topology $\tau_o$ ((o)-topology) can be defined
on any poset. In our observations we will consider only bounded lattices
and we will give a characterization of interval topologies on them.

\begin{definition}
Let $L$ be a bounded lattice. Let
$\mathcal{H}=\{[a,b]\subseteq L|a,b\in L$ with $a\le
b\}$ and let $\mathcal{G}=\{\bigcup^{n}_{k=1}[ a_k,b_k]|[
a_k,b_k]\in\mathcal{H}, k=1,2,...,n\}$. The \emph{interval topology
$\tau_i$ of $L$} is the topology of $L$ with $\mathcal{G}$ as a
closed basis, hence with $\mathcal{H}$ as a closed subbasis.
\end{definition}

From definition of $\tau_i$ we obtain that $U\in\tau_i$ iff for each
$x\in U$ there is $F\in\mathcal{G}$ such that $x\in L\backslash
F\subseteq U$.

\begin{definition} Let $L$ be a poset.
\begin{itemize}
\item[\mbox{\rm{}(i)}] A net $(x_\alpha)_{\alpha\in\mathcal{E}}$
of elements of $L$ {\emph{ order converges}}
({\emph{$(o)$-converges}}, for short) to a point $x\in L$
if there exist nets
$(u_\alpha)_{\alpha\in\mathcal{E}}$ and
$(v_\alpha)_{\alpha\in\mathcal{E}}$ of elements of $L$
such that
$$
x\uparrow u_\alpha\le x_\alpha\le v_\alpha\downarrow x,\, {\alpha\in\mathcal{E}}
$$
\noindent{}where $x\uparrow u_\alpha$ means that $u_{\alpha_1}\le u_{\alpha_2}$ for every
${\alpha_1}\le {\alpha_2}$ and
$x=\bigvee\{x_\alpha\mid\alpha\in\mathcal{E}\}$. The meaning of
$v_\alpha\downarrow x$ is dual.

We write  $x_\alpha\stackrel{(o)}{\rightarrow} x, {\alpha\in\mathcal{E}}$ in $L$.

\item[\mbox{\rm{}(ii)}]  A topology $\tau_{0}$ on $L$ is called
the \emph{order topology}  on $L$ iff
\begin{enumerate}
\item[\mbox{\rm{}(a)}]  for any net
$(x_\alpha)_{\alpha\in\mathcal{E}}$ of elements of $L$ and $x\in L$:
$x_\alpha\stackrel{(o)}{\rightarrow} a$ in L $\Rightarrow$
{$x_\alpha\stackrel{\tau_{0}}{\rightarrow} x$}, ${\alpha\in\mathcal{E}}$, where
{$x_\alpha\stackrel{\tau_{0}}{\rightarrow} x$} denotes that
$(x_\alpha)_{\alpha\in\mathcal{E}}$ \emph{converges} to $x$ in
the topological space $(L,\tau_{0})$,
\item[\mbox{\rm{}(b)}] if $\tau$ is a topology on $L$ with
property (a) then $\tau\subseteq \tau_o$.
\end{enumerate}
\end{itemize}
\end{definition}
Hence $\tau_o$ is the strongest (finest, biggest) topology on $L$
with property (a).

Recall that, for  a directed set $(\mathcal{E},\leq)$, a subset
$\mathcal{E}'\subseteq\mathcal{E}$ is called
\emph{cofinal} in $\mathcal{E}$ iff for every $\alpha\in\mathcal{E}$
there is $\beta\in\mathcal{E}'$ such that $\alpha\leq\beta$.
A special kind of a {\em subnet} of  a net
$(x_\alpha)_{\alpha\in\mathcal{E}}$ is net
$(x_\beta)_{\beta\in\mathcal{E}'}$ where $\mathcal{E}'$ is
a cofinal subset of $\mathcal{E}$. This kind of subnets works
in many cases of our considerations.

In what follows we often use the following useful
characterization of topological convergence of nets:

\begin{lemma}\label{charcofin} For a net $(x_\alpha)_{\alpha\in\mathcal{E}}$ of elements of
a topological space $(X, \tau)$ and $x\in X$:
\begin{center}
\begin{tabular}{l c l}
$x_\alpha\stackrel{\tau}{\rightarrow}x,
\alpha\in\mathcal{E}$ &iff& for all
$\mathcal{E}'\subseteq\mathcal{E}$, where $\mathcal{E}'$ is
cofinal in $\mathcal{E}$ there exist\\
&&$\mathcal{E}''\subseteq\mathcal{E}'$, $\mathcal{E}''$ cofinal in
$\mathcal{E}'$ such that $x_\gamma\stackrel{\tau}{\rightarrow}x,
\gamma\in\mathcal{E}''$.
\end{tabular}
\end{center}
\end{lemma}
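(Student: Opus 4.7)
The plan is to prove the two directions of the equivalence separately, with the forward direction being immediate and the reverse direction proceeding by contradiction via the negation of topological convergence.

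For the forward direction, suppose $x_\alpha \stackrel{\tau}{\to} x$ along $\mathcal{E}$. Given any cofinal $\mathcal{E}' \subseteq \mathcal{E}$, I would simply take $\mathcal{E}'' = \mathcal{E}'$. Since restriction of a net to a cofinal subset preserves topological convergence (every neighborhood $U$ of $x$ that eventually contains the original net also eventually contains the restricted net, because the tail of $\mathcal{E}'$ past any threshold is nonempty by cofinality), we obtain $x_\gamma \stackrel{\tau}{\to} x$ along $\mathcal{E}''$.

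For the reverse direction, I would argue by contrapositive. Assume $x_\alpha \not\stackrel{\tau}{\to} x$. Then by definition of convergence in a topological space, there exists an open neighborhood $U$ of $x$ such that for every $\alpha \in \mathcal{E}$ there is some $\beta \geq \alpha$ with $x_\beta \notin U$. Set
\[
\mathcal{E}' = \{\beta \in \mathcal{E} \mid x_\beta \notin U\}.
\]
By construction $\mathcal{E}'$ is cofinal in $\mathcal{E}$. Now for any cofinal $\mathcal{E}'' \subseteq \mathcal{E}'$, the net $(x_\gamma)_{\gamma \in \mathcal{E}''}$ lies entirely outside $U$, so it cannot converge topologically to $x$ (its tail never enters the neighborhood $U$). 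This produces a cofinal $\mathcal{E}'$ admitting no further cofinal refinement $\mathcal{E}''$ along which the net converges to $x$, negating the right-hand side.

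I do not anticipate any real obstacle here: the result is a standard fact about nets, essentially the ``subnet characterization'' of convergence specialized to cofinal-subset subnets. The only subtlety is cosmetic, namely keeping track of the distinction between the authors' notion of subnet (cofinal subset of the directing set) and the more general Willard/Kelley notion; but since the argument only uses that cofinal subsets inherit directedness and cofinality, the specialized form suffices, and no appeal to the full subnet machinery is needed.
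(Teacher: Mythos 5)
Your proposal is correct and follows essentially the same route as the paper: the forward direction is handled by restricting to the cofinal subset itself (which the paper dismisses as trivial), and the reverse direction uses the same contradiction/contrapositive argument, extracting from a non-convergent net a neighborhood $U$ of $x$ and the cofinal set of indices whose terms avoid $U$, along which no cofinal refinement can converge. The only cosmetic difference is that you take $\mathcal{E}'$ to be all indices $\beta$ with $x_\beta\notin U$, whereas the paper collects one witness $\beta_\alpha\ge\alpha$ per $\alpha$; both sets are cofinal and the argument is identical.
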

\begin{proof} $\Rightarrow$: It is trivial.

\noindent{}$\Leftarrow$: Let for every $\mathcal{E}'\subseteq\mathcal{E}$, where
$\mathcal{E}'$ is cofinal in $\mathcal{E}$ there exist
$\mathcal{E}''\subseteq\mathcal{E}'$, $\mathcal{E}''$ cofinal in
$\mathcal{E}'$ and $x_\gamma\stackrel{\tau}{\rightarrow}x,
\gamma\in\mathcal{E}''$, and let
$x_\alpha\not\stackrel{\tau}{\rightarrow}x, \alpha\in\mathcal{E}$.
Then there exist $U(x)\in\tau$ such that for all
$\alpha\in\mathcal{E}$ there exist $\beta_\alpha\in\mathcal{E}$ with
$\beta_\alpha\ge\alpha$ and $x_{\beta_\alpha}\not\in U(x)$. Let
$\mathcal{E}'=\{\beta_\alpha\in\mathcal{E}|\alpha\in\mathcal{E},\beta_\alpha\ge\alpha,
x_{\beta_\alpha}\not\in U(x)\}$ then
$x_{\beta_\alpha}\not\stackrel{\tau}{\rightarrow}x,
{\beta_\alpha}\in\mathcal{E}'$ and for all cofinal
$\mathcal{E}''\subseteq\mathcal{E}'$:
$x_\gamma\not\stackrel{\tau}{\rightarrow}x, \gamma\in\mathcal{E}''$.
Hence there exists $\mathcal{E}'\subseteq\mathcal{E}$ cofinal in
$\mathcal{E}$ and for all $\mathcal{E}''\subseteq\mathcal{E}'$,
$\mathcal{E}''$ cofinal in $\mathcal{E}'$:
$x_\gamma\not\stackrel{\tau}{\rightarrow}x, \gamma\in\mathcal{E}''$
a contradiction.
\end{proof}

Further, let us recall the following well known facts:

\begin{lemma}\label{pomlem} Let $L$ be a bounded lattice. Then
\begin{enumerate}
\item[\mbox{\rm{}(i)}]  $F\subseteq L$ is $\tau_0$-closed iff for every net
$(x_\alpha)_{\alpha\in\mathcal{E}}$ of elements of $L$ and $x\in L$:\\
$(x_\alpha\in
F, x_\alpha\stackrel{(o)}{\rightarrow} x,
\alpha\in\mathcal{E}) \Rightarrow x\in F$.
\item[\mbox{\rm{}(ii)}]   For every
$a,b\in L$ with $a\le b$ the interval $[a,b]$ is $\tau_0$-closed.
\item[\mbox{\rm{}(iii)}] $\tau_{i} \subseteq \tau_{o}$.
\item[\mbox{\rm{}(iv)}] For any net $(x_\alpha)_{\alpha\in\mathcal{E}}$ of elements of
a $L$ and $x\in L$:
$$x_\alpha\stackrel{(o)}{\rightarrow} x, {\alpha\in\mathcal{E}}\
\Longrightarrow\
x_\alpha\stackrel{\tau_{i}}{\rightarrow} x, {\alpha\in\mathcal{E}}.$$
\item[\mbox{\rm{}(v)}] If $\tau_i$  is Hausdorff then $\tau_0=\tau_i$ (see \cite{erne1}).
\item[\mbox{\rm{}(vi)}] The interval topology $\tau_i$ of a lattice $L$ is
compact iff $L$ is a complete lattice (see \cite{frink}).
\end{enumerate}
\end{lemma}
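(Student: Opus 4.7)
The plan is to handle (i)--(iv) as a short chain and then cite the literature for (v)--(vi). The unifying observation is that $\tau_o$ is, by construction, the finest topology for which (o)-convergent nets converge, while $\tau_i$ has as closed subbasis the collection of intervals $[a,b]$; so everything reduces to showing that intervals respect (o)-limits.

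For (i), the forward implication is immediate from property (a) of $\tau_o$: if $F$ is $\tau_o$-closed, $x_\alpha \in F$ and $x_\alpha \stackrel{(o)}{\to} x$, then $x_\alpha \stackrel{\tau_o}{\to} x$, so $x \in F$. For the converse I would show that the family $\ccf$ of all subsets of $L$ closed under (o)-limits of nets forms the closed sets of some topology $\tau^{\ast}$ on $L$, and that $\tau^{\ast}$ satisfies property (a); then by the maximality clause (b), $\tau^{\ast} \subseteq \tau_o$, so every $\tau^{\ast}$-closed set is $\tau_o$-closed. That $\emptyset, L \in \ccf$ and that $\ccf$ is closed under arbitrary intersections is routine. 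The one nontrivial step is closure under finite union: given a net $(x_\alpha)_{\alpha \in \mathcal{E}} \subseteq F_1 \cup F_2$ with $x_\alpha \stackrel{(o)}{\to} x$, a directed-set pigeonhole argument produces a cofinal $\mathcal{E}' \subseteq \mathcal{E}$ with $x_\beta \in F_i$ for all $\beta \in \mathcal{E}'$ and some fixed $i$; the restrictions of the witness nets $u_\alpha \uparrow x$ and $v_\alpha \downarrow x$ to $\mathcal{E}'$ retain their monotonicity and, because $\mathcal{E}'$ is cofinal, their supremum/infimum $x$, so the restricted net $(x_\beta)_{\beta \in \mathcal{E}'}$ still (o)-converges to $x$ and membership in $F_i$ forces $x \in F_i \subseteq F_1 \cup F_2$. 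This cofinal-restriction argument is the only genuine obstacle in the whole lemma.

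Parts (ii)--(iv) then cascade quickly. For (ii), apply (i): a net $(x_\alpha) \subseteq [a,b]$ with $x_\alpha \stackrel{(o)}{\to} x$ admits witnesses $u_\alpha \uparrow x$ and $v_\alpha \downarrow x$ with $u_\alpha \le x_\alpha \le v_\alpha$; combining with $a \le x_\alpha \le b$ yields $x = \bigvee u_\alpha \le b$ and $x = \bigwedge v_\alpha \ge a$, so $x \in [a,b]$. For (iii), the closed subbasis $\mathcal{H}$ of $\tau_i$ consists precisely of intervals $[a,b]$, which are $\tau_o$-closed by (ii); since finite unions and arbitrary intersections of $\tau_o$-closed sets remain $\tau_o$-closed, every $\tau_i$-closed set is $\tau_o$-closed, i.e.\ $\tau_i \subseteq \tau_o$. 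For (iv), combine (iii) with property (a) of $\tau_o$: $x_\alpha \stackrel{(o)}{\to} x$ gives $x_\alpha \stackrel{\tau_o}{\to} x$, hence $x_\alpha \stackrel{\tau_i}{\to} x$ because $\tau_i \subseteq \tau_o$ means $\tau_i$-neighborhoods of $x$ are in particular $\tau_o$-neighborhoods of $x$. Statement (v) is Ern\'e's theorem from \cite{erne1}, and (vi) is Frink's classical theorem from \cite{frink}; both are invoked as stated.
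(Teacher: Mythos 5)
Your argument is correct, but there is essentially nothing in the paper to compare it against: the authors record Lemma~\ref{pomlem} as a list of ``well known facts'' and give no proof at all, citing \cite{erne1} for (v) and \cite{frink} for (vi). So your write-up supplies a verification the paper omits. The route you take is the standard one and fits the paper's own toolkit: your key device --- that restricting an (o)-convergent net to a cofinal subset $\mathcal{E}'$ of its index set preserves (o)-convergence, because the restricted witness nets $(u_\beta)_{\beta\in\mathcal{E}'}$ and $(v_\beta)_{\beta\in\mathcal{E}'}$ stay monotone and, by cofinality, keep the same supremum and infimum $x$ --- is exactly the kind of cofinal-subnet manipulation the paper isolates in the discussion preceding Lemma~\ref{charcofin}. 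One point you should make explicit in (i): the cofinal-restriction argument is needed twice, not once. You invoke it to show that the family of (o)-limit-closed sets is stable under finite unions (so that $\tau^{\ast}$ is a topology), but you also need it to verify that $\tau^{\ast}$ has property (a): given $x_\alpha\stackrel{(o)}{\rightarrow}x$ and a $\tau^{\ast}$-open $U\ni x$ with $(x_\alpha)$ frequently in $L\setminus U$, restrict to the cofinal set of such indices and use that $L\setminus U$ is (o)-limit-closed to reach the contradiction $x\in L\setminus U$. You assert property (a) for $\tau^{\ast}$ without saying how; since that is the step which actually converts ``closed under (o)-limits'' into $\tau^{\ast}\subseteq\tau_o$ via clause (b), it deserves the extra line. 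With that added, (i)--(iv) are complete and correct, and (v), (vi) are properly attributed to Ern\'e--Weck and Frink respectively.
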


Finally, let us note that compact Hausdorff topological space is
always normal. Thus separation axiom $T_2,T_3$ and $T_4$ are
trivially equivalent for the interval topology of a complete
lattice $L$.

\begin{theorem}\label{compconvti}
Let $L$ be a complete lattice with interval
topology $\tau_i$. If $F\subseteq L$ is a complete sub-lattice of $L$
then
\begin{enumerate}
\item[\mbox{\rm{}(a)}]   $\tau^{F}_{i}=\tau_i\cap F$ is the interval topology of $F$,
\item[\mbox{\rm{}(b)}]  for any net $(x_\alpha)_{\alpha\in\mathcal{E}}$ of elements of
$F$ and $x\in F$:
$$x_\alpha\stackrel{\tau^{F}_{i}}{\rightarrow} x, {\alpha\in\mathcal{E}}
\ \Longleftrightarrow \
x_\alpha\stackrel{\tau_{i}}{\rightarrow} x, {\alpha\in\mathcal{E}}.$$
\end{enumerate}
\end{theorem}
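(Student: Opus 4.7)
The plan is to compare the closed subbases defining the two topologies on $F$ and show they generate the same topology, with part (b) following from (a) by a standard property of subspace topologies. Since $F$ is a complete sub-lattice of $L$, it is closed under arbitrary joins and meets taken in $L$; in particular $0_L, 1_L \in F$. The closed subbasis of the interval topology $\tau_i^F$ on $F$ consists of the intervals $[a,b]\cap F$ with $a,b\in F$ and $a\le b$, while the closed subbasis of the subspace topology $\tau_i \cap F$ on $F$ consists of the sets $[a,b]\cap F$ with $a,b\in L$ and $a\le b$.

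One inclusion, $\tau_i^F \subseteq \tau_i \cap F$, is immediate: every closed subbasic set of $\tau_i^F$ is already a closed subbasic set of $\tau_i \cap F$. For the opposite inclusion, given $a,b\in L$ with $a\le b$, I put
$$
a^+ = \bigwedge\{y\in F\mid y\ge a\}, \qquad b^- = \bigvee\{y\in F\mid y\le b\}.
$$
The two sets contain $1_L$ and $0_L$ respectively, and by completeness of the sub-lattice $F$ both $a^+$ and $b^-$ lie in $F$, with $a^+\ge a$ and $b^-\le b$. The key step is to observe that for $x\in F$ one has $a\le x$ iff $a^+\le x$ and $x\le b$ iff $x\le b^-$: the nontrivial directions use that such an $x$ itself lies in the defining set of $a^+$ (resp.\ $b^-$), while the reverse directions follow from $a^+\ge a$ and $b^-\le b$. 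Consequently
$$
[a,b]\cap F = \{x\in F\mid a^+\le x\le b^-\},
$$
which equals the subbasic $\tau_i^F$-closed set $[a^+,b^-]\cap F$ when $a^+\le b^-$ and is empty otherwise. Either way it is $\tau_i^F$-closed, so $\tau_i\cap F\subseteq \tau_i^F$, establishing (a).

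Part (b) is then essentially formal: for any topological space and any subset equipped with the induced topology, a net in the subset with a limit in the subset converges in the ambient topology if and only if it converges in the subspace topology, since every $\tau_i$-neighbourhood of the limit meets $F$ in a $\tau_i\cap F$-neighbourhood and every $\tau_i\cap F$-neighbourhood is of this form. Combining this with the identification $\tau_i^F = \tau_i\cap F$ from (a) yields the equivalence in (b).

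The main obstacle is the correct handling of the definition of \emph{complete sub-lattice}: one needs $F$ to be closed under arbitrary joins and meets as computed in $L$, not merely to be a complete lattice in its own right under the induced order. Under that (standard) interpretation the argument reduces to the elementary verifications sketched above; without it, the constructions of $a^+$ and $b^-$ would not automatically land in $F$ and the whole identification would fail.
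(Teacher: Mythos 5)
Your proof is correct and follows essentially the same route as the paper's: both reduce a subbasic closed set $[a,b]\cap F$ of the subspace topology to an interval of $F$ with endpoints obtained from the completeness of the sub-lattice (the paper uses $c=\bigwedge([a,b]\cap F)$ and $d=\bigvee([a,b]\cap F)$, while you use $a^+=\bigwedge\{y\in F\mid y\ge a\}$ and $b^-=\bigvee\{y\in F\mid y\le b\}$, a cosmetic difference), and both dispose of (b) as a formal consequence of (a). Your explicit remark that ``complete sub-lattice'' must mean closed under arbitrary joins and meets computed in $L$ is exactly the reading the paper relies on.
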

\begin{proof}\mbox{\rm{}(a)}: Let $\mathcal{H}$ and $\mathcal{H}_{F}$ be a
closed subbasis of $\tau_i$ and $\tau^{F}_{i}$ respectively. Then
evidently
$\mathcal{H}\cap F=\{[a,b]\cap F|[a,b]\in\mathcal{H}\}$
is a closed subbasis of $\tau_i\cap
F$. Further for $[c,d]_F\in\mathcal{H}_F$ we have
$[c,d]_F=\{x\in F|c\le x\le d\}=[c,d]\cap F\in\mathcal{H}\cap F$.
Conversely, since $F$ is a
complete sub-lattice of $L$, if $[a,b]\in\mathcal{H}$
then $[a,b]\cap F=\{x\in F|a\le x\le b\}$  and either
$[a,b]\cap F=\emptyset$ or there exist
$c=\wedge\{x\in F|a\le x\le b\}$ and $d=\vee\{x\in F|a\le x\le
d\}$ and $[a,b]\cap F=[c,d]_F\in\mathcal{H}_F$. This proves that
$\tau^{F}_{i}=\tau_i\cap F$.

\noindent{}\mbox{\rm{}(b)}: This is an easy consequence of \mbox{\rm{}(a)}.
\end{proof}

\section{Hausdorff interval topology of almost orthogonal
Archimedean atomic lattice effect algebras
and their order continuity}

The atomicity of Boolean algebra $B$ is equivalent with
Hausdorffness of interval topology on $B$
(see \cite{katetov}, \cite{sarymsakov} and \cite[Corollary 3.4]{ZR47}).
This is not more valid for lattice effect algebras, even
also for MV-algebras.

\begin{example} {\rm Let $M=[0, 1]\subseteq {\mathbb R}$
be a standard MV-effect algebra, i.e., we define
$a\oplus b=a+b$ iff $a+b\leq 1$, $a, b\in M$.
Then $M$ is a complete (o)-continuous lattice with
$\tau_i=\tau_o$ being Hausdorff and with
(o)-convergence of nets coinciding with
$\tau_o$-convergence. Nevertheless, $M$ is not atomic.
}
\end{example}

We have proved in \cite{PR4} that a complete lattice
effect algebra is atomic and (o)-continuous lattice
iff $E$ is compactly generated. Nevertheless, in such
a case, the interval topology on $E$ need not be Hausdorff.

\begin{example}{\rm Let $E$ be a horizontal sum of infinitely many finite
 chains $(P_i,\bigoplus_i, 0_i, 1_i)$ with at least 3 elements,
 $i=1,2,...,n, \dots, $ (i. e., for
 $i=1,2,...,n, \dots,$, we identify all $0_i$ and all
 $1_i$ as well, $\bigoplus_i$ on $ P_i$
 are preserved and any
 $a\in P_i\backslash\{0_i,1_i\}$, $b\in
 P_j\backslash\{0_j,1_j\}$ for $i\not=j$
 are noncomparable). Then  $E$ is an atomic complete lattice effect algebra,
$E$ is not block-finite
 and the interval topology $\tau_i$ on $E$  is compact. 
 Nevertheless, $\tau_i$ is not Hausdorff
because e.g., for $a\in P_i, b\in P_j, i\not=j$,
$a, b$ noncomparable, we have
 $[a,1]\cap [0, b]=\emptyset$ and there is no finite family
 $\mathcal{I}$ of closed intervals in $E$ separating
 $[a,1], [0, b]$ (i.e., the lattice
 $E$ can not be covered by a finite number of closed intervals from $\mathcal{I}$ each of
 which is disjoint with at least one of the intervals $[a,1]$
 and $[0,b]$). This implies that $\tau_i$ is not Hausdorff
 by \cite[Lemma 2.2]{ZR47}.
 Further $E$ is {\em compactly generated by finite elements}
 (hence (o)-continuous). It follows by \cite{PR4} that
 the order topology $\tau_o$ on $E$ is a uniform topology
 and (o)-convergence of nets on $E$ coincides with
 $\tau_o$-convergence.
 }
 \end{example}

 In what follows we shall need an extension of
\cite[Lemma 2.1 (iii)]{ZR70}.

\begin{lemma}\label{extfu} Let $E$ be a lattice effect algebra, $x, y\in E$.
Then $x\wedge y=0$ and $x\leq y'$ iff $kx\wedge ly=0$ and $kx\leq (ly)'$,
whenever $kx$ and $ly$ exist in $E$.
\end{lemma}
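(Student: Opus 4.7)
The proof splits into two directions.

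The $(\Leftarrow)$ direction is immediate from monotonicity. Since $x \leq kx$ and $y \leq ly$, we have $x \wedge y \leq kx \wedge ly = 0$; and $y \leq ly$ gives $(ly)' \leq y'$, so $x \leq kx \leq (ly)' \leq y'$.

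For the nontrivial direction $(\Rightarrow)$, the plan is to place $x$ and $y$ inside a common block of $E$ and then invoke MV-algebra arithmetic. Assume $x \wedge y = 0$ and $x \leq y'$. First I would show $x \leftrightarrow y$: since $x \vee y \leq x \oplus y$, the element $z := (x \oplus y) \ominus (x \vee y)$ is defined, and from $z \leq (x \oplus y) \ominus x = y$ together with $z \leq (x \oplus y) \ominus y = x$ we obtain $z \leq x \wedge y = 0$, hence $x \vee y = x \oplus y = x \oplus (y \ominus (x \wedge y))$, which is exactly the compatibility condition recalled in the introduction. By the block decomposition theorem \cite{ZR56}, $x$ and $y$ therefore lie in a common block $M$, a sub-MV-effect algebra of $E$; any multiples $kx$, $ly$ that exist in $E$ lie already in $M$ because $M$ is a sub-effect algebra.

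Next I would transfer the problem to the abelian $\ell$-group $G$ with strong unit $1_G$ representing $M$ via Mundici's correspondence, so that $M \cong [0, 1_G]$ and $\oplus$ is truncated addition. The key $\ell$-group fact is that disjointness of positive elements is preserved under positive-integer multiples. I would verify this by induction on $n$: given $x \wedge ny = 0$, set $c := x \wedge (n+1)y$ and $d := (c - y)^+$; then $d \leq c \leq x$ and $d \leq ny$, so the inductive hypothesis forces $d = 0$, whence $c \leq y$ and therefore $c \leq x \wedge y = 0$. A symmetric induction in $x$ yields $kx \wedge ly = 0$ in $G$, and this reads the same way in $M$ and in $E$.

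For the remaining inequality $kx \leq (ly)'$, the standard $\ell$-group identity $a + b = (a \vee b) + (a \wedge b)$ combined with $kx \wedge ly = 0$ gives $kx + ly = kx \vee ly$; since both $kx$ and $ly$ lie in $[0, 1_G]$, their join is at most $1_G$, so $kx + ly \leq 1_G$, which translates back to $kx \leq (ly)'$ in $E$. The principal conceptual step is the compatibility reduction; once $x$ and $y$ are placed in a common block, the classical $\ell$-group identities do the rest with no further obstacles.
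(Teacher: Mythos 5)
Your proof is correct, but it takes a genuinely different route from the paper's. The paper argues entirely inside the lattice effect algebra: from $x\wedge y=0$ and $x\le y'$ it gets $x\oplus y=(x\vee y)\oplus(x\wedge y)=x\vee y\le y'$, then uses distributivity of $\oplus$ over $\vee$ to compute $x\oplus 2y=(x\vee y)\oplus y=x\vee 2y$, which yields $x\le(2y)'$ and, by cancellation against $x\oplus 2y=(x\vee 2y)\oplus(x\wedge 2y)$, also $x\wedge 2y=0$; an induction in $l$ followed by the symmetric induction in $k$ finishes the forward direction, and the converse is the same monotonicity observation you make. You instead first show that the hypotheses force $x\leftrightarrow y$, localize to a common block, and then do the arithmetic in the enveloping abelian $\ell$-group via Mundici's correspondence; your disjointness induction with $d=(c-y)^+$ and the identity $a+b=(a\vee b)+(a\wedge b)$ are precisely the group-level shadows of the paper's effect-algebra computation. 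Your version buys a clean conceptual picture --- the lemma is really a statement about two compatible elements, hence an MV-algebra/$\ell$-group fact --- at the cost of invoking the block decomposition and Mundici's theorem, neither of which is needed; the paper's argument is more elementary and self-contained. Both proofs are complete. If you keep your route, do make explicit that a block is a sub-lattice and sub-effect algebra of $E$, so that the meets, the complements $(ly)'$, and the multiples $kx$, $ly$ computed in $M$ agree with those computed in $E$: you assert this in passing, but it is the one point where the reduction could silently go wrong.
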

\begin{proof} Let $x\leq y'$, $x\wedge y=0$ and $2y$ exists in $E$. Then
$x\oplus y=(x\vee y)\oplus (x\wedge y)=x\vee y\leq y'$ and hence there exists
$x\oplus 2y=(x\vee y)\oplus y=(x\oplus y)\vee 2y=x\vee y\vee 2y=x\vee  2y$,
which gives that $x\leq (2y)'$ and $x\wedge 2y=0$. By induction, if
$ly$ exists then $x\oplus ly=x\vee  ly$ and hence $x\leq (ly)'$ and $x\wedge ly=0$.

Now, $x\leq (ly)'$ iff $ly\leq x'$ and because $x\wedge ly=0$, we obtain
by the same argument as above that $ly\oplus kx=ly\vee  kx$, hence
$kx\leq (ly)'$ and $ly\wedge kx=0$ whenever $kx$  exists in $E$.

Conversely, $kx\wedge ly=0$ implies that $x\wedge y=0$ and $kx\leq (ly)'$
implies  $x\leq kx\leq (ly)'\leq y'$. \end{proof}

In next we will use the statement of Lemma \ref{extfu} in the following form:
For any $x, y\in E$ with $x\wedge y=0$, $x\not\leq y'$ iff $kx\not\leq (ly)'$,
whenever $kx$ and $ly$ exist in $E$.

\begin{definition}\label{almoml} Let $E$ be an atomic lattice effect algebra.
 $E$ is said to be {\em almost orthogonal} if the set
$\{ b\in E \mid b\not\leq a', b\ \mbox{is an atom}\}$ is finite for every
atom $a\in E$.
 \end{definition}

Note that our definition of almost orthogonality coincides with the usual
 definition for orthomodular lattices (see e.g. \cite{ZR26, ZR34}).

\begin{theorem}\label{almdef} Let $E$ be an Archimedean atomic
lattice effect algebra.
 Then $E$ is  {almost orthogonal} if and only if
for any atom $a\in E$ and any integer $l$,
$1\leq l\leq n_a$,  there are finitely many atoms $c_1, \dots, c_m$ and
integers $j_1, \dots, j_m$,
$1\leq j_1\leq n_{c_1}, \dots, 1\leq j_m\leq n_{c_m}$ such that
$j_k{}c_k\not\leq (la)'$ for all $k\in \{ 1, \dots, m\}$ and, for all
$x\in E$,
$x\not\leq (la)'$ implies $j_{k_0}{}c_{k_0}\leq x$ for some
${k_0}\in \{ 1, \dots, m\}$.
 \end{theorem}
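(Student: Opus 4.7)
The plan is to prove both directions, with the reverse implication being the easy one and the forward one hinging on Theorem \ref{Theorem3.3} and Lemma \ref{extfu}.

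For $(\Leftarrow)$, fix an atom $a$ and apply the hypothesis with $l=1$ to obtain a finite family $(c_k,j_k)_{k=1}^m$ for the pair $(a,1)$. Given any atom $b$ with $b\not\leq a'=(1\cdot a)'$, the covering condition applied to $x=b$ yields some $k_0$ with $j_{k_0}c_{k_0}\leq b$; since $c_{k_0}\leq j_{k_0}c_{k_0}\leq b$ and both $b$ and $c_{k_0}$ are atoms, $c_{k_0}=b$. Hence $\{b\in E\mid b\ \mbox{is an atom},\ b\not\leq a'\}\subseteq\{c_1,\ldots,c_m\}$ is finite, which is precisely almost orthogonality.

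For $(\Rightarrow)$, fix an atom $a$ and $1\leq l\leq n_a$. I will use the finite family indexed by $A^*:=\{b\in E\mid b\ \mbox{is an atom},\ b\not\leq a'\}\cup\{a\}$, setting $j_b:=1$ for $b\in A^*\setminus\{a\}$ and $j_a:=n_a-l+1$; the constraints $1\leq j_b\leq n_b$ are immediate. To see $j_b b\not\leq (la)'$: if $b\neq a$ then $b\wedge a=0$ and $b\not\leq a'$, so Lemma \ref{extfu} applied with $k=1$, $j=l$ yields $b\not\leq (la)'$; if $b=a$ then $(j_a+l)a=(n_a+1)a$ does not exist in $E$, so $j_a a\not\leq (la)'$.

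It remains to verify the covering condition. For $x\not\leq (la)'$, Theorem \ref{Theorem3.3} provides an atomic decomposition $x=\bigoplus_\alpha k_\alpha a_\alpha=\bigvee_\alpha k_\alpha a_\alpha$ with mutually distinct atoms $a_\alpha$ and $1\leq k_\alpha\leq n_{a_\alpha}$. If every $k_\alpha a_\alpha$ were $\leq (la)'$, the join would also be $\leq (la)'$, contradicting $x\not\leq (la)'$; so some $\alpha_0$ satisfies $k_{\alpha_0}a_{\alpha_0}\not\leq (la)'$. If $a_{\alpha_0}=a$, then $k_{\alpha_0}+l>n_a$, so $k_{\alpha_0}\geq j_a$ and $j_a a\leq x$; if $a_{\alpha_0}\neq a$, then $a_{\alpha_0}\wedge a=0$, and the contrapositive form of Lemma \ref{extfu} (recorded immediately after its proof) yields $a_{\alpha_0}\not\leq a'$, placing $a_{\alpha_0}$ in $A^*\setminus\{a\}$ with $j_{a_{\alpha_0}}a_{\alpha_0}=a_{\alpha_0}\leq x$. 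The most delicate point is exactly this last reduction, since translating the multiplicity-heavy statement $k_{\alpha_0}a_{\alpha_0}\not\leq (la)'$ back to the bare incompatibility $a_{\alpha_0}\not\leq a'$ is what ties the argument to the finite set supplied by almost orthogonality, and Lemma \ref{extfu} is tailored precisely for this reduction.
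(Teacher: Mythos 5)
Your proof is correct and follows essentially the same route as the paper: both directions use the same finite family $A_a\cup\{a\}$ with multiplicities $1$ on $A_a$ and $n_a-l+1$ on $a$, the same appeal to Theorem \ref{Theorem3.3} to extract a component $k_{\alpha_0}a_{\alpha_0}\not\leq(la)'$ from the decomposition of $x$, and the same use of the contrapositive form of Lemma \ref{extfu} to reduce $k_{\alpha_0}a_{\alpha_0}\not\leq(la)'$ to $a_{\alpha_0}\not\leq a'$. The only cosmetic difference is that the paper splits the definition of the family according to whether $a\in S(E)$, which coincides with your $A^*$ in both cases.
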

 \begin{proof}$\Longrightarrow$: Assume that $E$ is  {almost orthogonal}.
 Let $a\in E$ be an atom, $1\leq l\leq n_a$. We shall denote
 $A_a=\{b\in E \mid b \ \mbox{is}\ \mbox{an}\ \mbox{atom},
b\not\leq a'\}$. Clearly, $A_a$ is finite i.e.
$A_a=\{ b_1, \dots, b_n\}$ for suitable atoms $b_1, \dots, b_n$
from $E$.

Let $b\in E$ be an atom, $1\leq k\leq n_b$ and $kb\not \leq (la)'$.
Either $b=a$ or $b\not={}a$ and in this case we have by
Lemma \ref{extfu} (iv) that $b\not \leq a'$. Hence
either $b=a$ or $b\in A_a$. Let us put
$\{ c_1, \dots, c_m\}= \left\{
     \begin{array}{l l} A_a&\mbox{if}\ a\in S(E)\\ 
                        A_a\cup \{ a\}&\mbox{otherwise}
     \end{array}\right.$. In both cases we have that
     $a\in \{ c_1, \dots, c_m\}$.

Now, let $x\in E$ and $x\not \leq (la)'$.
     By Theorem \ref{Theorem3.3} there is an atom $c\in E$ and
     an integer $1\leq j\leq n_c$ such that $jc\leq x$ and $jc\not \leq (la)'$.
     Either $c=a$ or $c\not\leq a$. In the first case we have that
     $j\geq (n_a-l+1)$ i.e. $x\geq  (n_a-l+1)a$. In the second case we
     get that $c\not \leq a'$ i.e. $c\in A_a$ and $x\geq b_i$ for
     suitable $i\in \{ 1, \dots, n\}$. Hence it is enough to
     put $j_k=1$ if $c_k\in A_a$ and $j_k=(n_a-l+1)$ if $c_k=a$.

\noindent{}$\Longleftarrow$: Conversely, let $a\in E$ be an atom.
Then there are finitely many atoms $c_1, \dots, c_m$ and
integers $j_1, \dots, j_m$,
$1\leq j_1\leq n_{c_1}, \dots, 1\leq j_m\leq n_{c_m}$ such that
$j_k{}c_k\not\leq a'$ for all $k\in \{ 1, \dots, m\}$ and, for all
$x\in E$,
$x\not\leq a'$ implies $j_{k_0}{}c_{k_0}\leq x$ for some
${k_0}\in \{ 1, \dots, m\}$. Let us check that
$A_a\subseteq \{ c_1, \dots, c_m\}$. Let $b\in A_a$. Then
$b\geq j_{k_0}{}c_{k_0}\geq c_{k_0}$ for some
${k_0}\in \{ 1, \dots, m\}$. Hence $b=c_{k_0}$. This yields
$A_a$ is finite.
\end{proof}


\begin{lemma}\label{almostort} Let $E$ be an almost
orthogonal Archimedean atomic lattice effect algebra.
Then, for any atom $a\in E$ and any integer $l$,
$1\leq l\leq n_a$ there are finitely many atoms
$b_1, \dots, b_n$ and integers $j_1, \dots, j_n$,
$1\leq j_1\leq n_{b_1}, \dots, 1\leq j_n\leq n_{b_n}$ such that
\begin{center}
\begin{tabular}{@{}l l}
&$E=[0, (la)']\cup \left(\bigcup_{k=1}^{n}[j_k{}b_k, 1]\cup %
[(n_a+1-l)a, 1]\right)$\\
and\qquad\qquad&\\
&$[0, (la)']\cap \left(\bigcup_{k=1}^{n}[j_k{}b_k, 1]\cup %
[(n_a+1-l)a, 1]\right)=\emptyset$.
\end{tabular}
\end{center} Hence $[0, (la)']$ is a clopen subset
in the interval topology.
\end{lemma}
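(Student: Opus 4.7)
The plan is to convert the almost-orthogonality data from Theorem \ref{almdef} into a finite cover of $E$ by closed subbasic intervals, and then exploit the definition of $n_a$ together with Lemma \ref{extfu} to establish disjointness. First I would apply Theorem \ref{almdef} to the atom $a$ and the integer $l$ to obtain finitely many atoms $c_1,\dots,c_m$ and multiplicities $j_1,\dots,j_m$ with $j_k c_k\not\leq (la)'$ for each $k$, such that every $x\in E$ with $x\not\leq (la)'$ satisfies $j_{k_0}c_{k_0}\leq x$ for some $k_0$. I would then split the list of $c_k$'s according to whether $c_k=a$ or $c_k\neq a$: at most one $c_k$ equals $a$, and in that case the proof of Theorem \ref{almdef} forces the paired integer to be $n_a+1-l$, so it contributes precisely the interval $[(n_a+1-l)a,1]$; the remaining $c_k$'s, distinct from $a$, I would relabel $b_1,\dots,b_n$ with their integers $j_1,\dots,j_n$.

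Covering $E$ is then immediate: any $x\in E$ either satisfies $x\leq (la)'$ and lies in $[0,(la)']$, or $x\not\leq (la)'$ and so by Theorem \ref{almdef} lies in $[(n_a+1-l)a,1]$ or in some $[j_k b_k,1]$. For disjointness I would argue separately. If $x\in [0,(la)']\cap [(n_a+1-l)a,1]$, then $(n_a+1-l)a\leq (la)'$, meaning $(n_a+1)a$ exists, contradicting $\ord(a)=n_a$. If $x\in [0,(la)']\cap [j_k b_k,1]$, then $j_k b_k\leq (la)'$; but $b_k\neq a$ is an atom, so $b_k\wedge a=0$, and $b_k\not\leq a'$ by the construction of the $c_k$'s in the proof of Theorem \ref{almdef}, whence Lemma \ref{extfu} in the contrapositive form recorded immediately after its proof gives $j_k b_k\not\leq (la)'$, a contradiction.

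Finally, $[0,(la)']$ is $\tau_i$-closed as a closed subbasic interval, and the union on the right is a finite union of closed subbasic intervals, hence also $\tau_i$-closed; since these two closed sets partition $E$, each is the complement of the other and therefore open as well, so $[0,(la)']$ is clopen. The only delicate point will be the case analysis at the moment of relabeling — specifically, tracking whether $a$ itself appears among the $c_k$'s (which depends on whether $a\in S(E)$) and verifying that the associated integer is $n_a+1-l$ — but this is routine bookkeeping given the explicit construction in the proof of Theorem \ref{almdef}.
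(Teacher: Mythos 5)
Your proposal is correct and follows essentially the same route as the paper: both extract the finite family from Theorem \ref{almdef}, separate the atom $a$ itself (contributing $[(n_a+1-l)a,1]$) from the atoms $b_k\neq a$, and then verify the cover and the disjointness, with the clopenness following because the two closed sets partition $E$. The only cosmetic differences are that the paper re-derives the covering step by decomposing $z$ via Theorem \ref{Theorem3.3} where you apply Theorem \ref{almdef} directly, and your disjointness argument (via $\ord(a)=n_a$ and the contrapositive of Lemma \ref{extfu}) spells out the contradictions the paper leaves implicit.
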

\begin{proof} Let $a\in E$ be an atom, $1\leq l\leq n_a$.
By Definition \ref{almdef},  let $\{j_1{}b_1, \dots, j_n{}b_n\}$ be
the finite set of  non-orthogonal
finite elements to $la$ of the form $j_kb_k$,
$1\leq j_k\leq n_{b_k}$ minimal  such that $b_1, \dots, b_n$
are atoms different from $a$. We put
$D=[0, (la)']\cup \left(\bigcup_{k=1}^{n}[j_kb_k, 1]\cup %
[(n_a+1-l)a, 1]\right)$. Let us check that $D=E$.
Clearly, $D\subseteq E$. Now, let $z\in E$. Then by
Theorem \ref{Theorem3.3} there are
mutually distinct atoms $c_{\gamma}\in{E}$, $\gamma{\in\mathcal E}$ and
integers $t_{\gamma}$ such that
$$
z=\bigoplus\{t_\gamma c_\gamma\mid\gamma\in{\mathcal E}\}
=\bigvee\{t_\gamma c_\gamma\mid\gamma\in{\mathcal E}\}.
$$

Either $t_\gamma c_\gamma\leq (la)'$ for all
$\gamma\in{\mathcal E}$ and hence $z\in [0, (la)']$
or there exists $\gamma_0\in{\mathcal E}$ such that
$t_{\gamma_0} c_{\gamma_0}\not\leq (la)'$.
Hence, by almost orthogonality, either
 $j_{k_0}b_{k_0}\leq t_{\gamma_0} c_{\gamma_0}\leq z$
for some $k_0\in \{1, \dots, n\}$ or
$(n_a+1-l)a\leq t_{\gamma_0} c_{\gamma_0}\leq z$.
 In both cases we get that  $z\in D$.

 Now, assume that
$y\in{}[0, (la)']\cap \left(\bigcup_{k=1}^{n}[j_k{}b_k, 1]\cup %
[(n_a+1-l)a, 1]\right)$. Then  $(n_a+1-l)a\leq y \leq (la)'$
or $j_k{}b_k\leq y \leq (la)'$ for some $k\in \{1, \dots, n\}$.
In any case we have a contradiction.
\end{proof}

\begin{proposition}\label{clopenint}
Let $E$ be an almost
orthogonal Archimedean atomic lattice effect algebra.
Then, for any not necessarily different atoms $a, b\in E$
and any integers $l, k$;
$1\leq l\leq n_a$,  $1\leq k\leq n_b$, the interval
$[kb,(la)']$ is clopen in the interval topology.
\end{proposition}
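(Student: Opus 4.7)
The plan is to reduce the claim to Lemma \ref{almostort} (the previously established clopenness of $[0,(la)']$) together with the observation that orthocomplementation is a self-homeomorphism of $(E,\tau_i)$. Concretely, I will write
\[
[kb,(la)'] \;=\; [kb,1]\cap[0,(la)']
\]
and argue that each factor is $\tau_i$-clopen, so their intersection is too. If it happens that $kb\not\le(la)'$, the interval is empty and trivially clopen, so throughout I may assume $kb\le(la)'$.

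The first factor $[0,(la)']$ is clopen directly by Lemma \ref{almostort}. For the second factor $[kb,1]$, I would invoke a general duality: the orthocomplementation map $\iota:E\to E$, $x\mapsto x'$, is a homeomorphism of $(E,\tau_i)$. Indeed, $\iota$ is an involution (by (Eiii)) and order-reversing, hence a bijection sending the closed subbasis $\mathcal{H}=\{[c,d]\mid c\le d\}$ to itself via $\iota([c,d])=[d',c']$. Consequently $\iota$ and $\iota^{-1}=\iota$ are continuous, so $\iota$ is a homeomorphism of $(E,\tau_i)$. Now applying Lemma \ref{almostort} to the atom $b$ and integer $k$ (with $1\le k\le n_b$), $[0,(kb)']$ is clopen, and its image under $\iota$ is
\[
\iota\bigl([0,(kb)']\bigr)\;=\;\bigl[((kb)')',\,0'\bigr]\;=\;[kb,1],
\]
which is therefore clopen as well.

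Combining the two observations, $[kb,(la)']=[kb,1]\cap[0,(la)']$ is an intersection of two clopen sets, hence clopen.

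I do not anticipate any serious obstacle: the content is carried by the prior lemma, and the auxiliary fact about $\iota$ being a $\tau_i$-homeomorphism is an easy checking step that could even be slipped in as a one-sentence remark. The only care needed is the degenerate case $kb\not\le(la)'$, which is handled immediately since $\emptyset$ is clopen.
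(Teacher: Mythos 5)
Your proof is correct and is essentially the paper's own argument: both decompose $[kb,(la)']=[kb,1]\cap[0,(la)']$, get $[0,(la)']$ clopen from Lemma \ref{almostort}, and obtain $[kb,1]$ clopen by duality (the paper invokes the dual algebra being almost orthogonal Archimedean atomic; you make the same point concretely via the order-reversing involution $x\mapsto x'$ being a $\tau_i$-homeomorphism). Your explicit handling of the degenerate case $kb\not\le(la)'$ is a harmless extra precaution.
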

\begin{proof} From Lemma \ref{almostort}
we have that $[0, (la)']$ is a clopen subset. Since a dual of an
almost orthogonal Archimedean atomic lattice effect
algebra is an almost orthogonal Archimedean atomic lattice
effect algebra as well, we have that
$[kb,1]$ is again clopen in the interval topology. Hence
also $[kb,(la)']$ is clopen in the interval topology.
\end{proof}

\begin{theorem}\label{haus}
Let $E$ be an almost
orthogonal Archimedean atomic lattice effect algebra.
Then the interval topology $\tau_i$ on $E$ is Hausdorff.
\end{theorem}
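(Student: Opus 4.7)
The strategy is to separate any two distinct points of $E$ by a $\tau_i$-clopen set of the form $[ka,1]$, which immediately yields the Hausdorff property. Given $x,y\in E$ with $x\neq y$, antisymmetry of $\leq$ forces $x\not\leq y$ or $y\not\leq x$, and by symmetry it suffices to treat the case $x\not\leq y$. Then $x\neq 0$, so by Theorem~\ref{Theorem3.3} I can decompose $x=\bigvee\{k_\alpha a_\alpha\mid\alpha\in\mathcal{E}\}$ as a join of multiples of pairwise distinct atoms with $1\leq k_\alpha\leq n_{a_\alpha}$.

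If every summand $k_\alpha a_\alpha$ were below $y$, then so would the join $x$, contradicting $x\not\leq y$. Hence there exists $\alpha_0$ with $k_{\alpha_0}a_{\alpha_0}\not\leq y$; set $a=a_{\alpha_0}$ and $k=k_{\alpha_0}$, so that $ka\leq x$ while $ka\not\leq y$.

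To conclude, I observe that the principal filter $[ka,1]$ is $\tau_i$-clopen. This is precisely the intermediate clopenness statement established inside the proof of Proposition~\ref{clopenint}: applying Lemma~\ref{almostort} to the dual effect algebra of $E$, which is again almost orthogonal, Archimedean, and atomic, gives that $[ka,1]$ is clopen in $\tau_i$. Consequently $U=[ka,1]$ is a $\tau_i$-open neighborhood of $x$, $V=E\setminus[ka,1]$ is a $\tau_i$-open neighborhood of $y$, and $U\cap V=\emptyset$, establishing $T_2$.

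I do not foresee any serious obstacle, because the heavy lifting has been front-loaded into Lemma~\ref{almostort} and Proposition~\ref{clopenint}, which together guarantee that the principal filter generated by any finite multiple of an atom is $\tau_i$-clopen. The only delicate point is selecting the witnessing multiple $ka$, and Theorem~\ref{Theorem3.3} is exactly the right tool: any nonzero element is the join of its finite atomic multiples, so as soon as $x\not\leq y$ at least one of those covering multiples must also fail to lie below $y$, which is all that the clopen separator $[ka,1]$ needs.
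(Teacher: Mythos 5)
Your proof is correct and follows essentially the same route as the paper: both arguments extract, via Theorem~\ref{Theorem3.3}, a finite atomic multiple $ka\leq x$ with $ka\not\leq y$, and both rest on the clopenness of principal filters $[ka,1]$ established through Lemma~\ref{almostort} and Proposition~\ref{clopenint}. The only difference is a mild streamlining: the paper additionally applies the dual of Theorem~\ref{Theorem3.3} to produce an explicit second interval $[0,(la)']$ containing $y$, whereas you separate with the single clopen set $[ka,1]$ and its complement, which suffices.
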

\begin{proof} Let $x, y\in E$ and $x\not =y$. Then (without loss
of generality) we may assume that $x\not \leq y$. Then
by \cite[Theorem 3.3]{ZR65} there
exists an atom $b$ from $E$ and an integer $k$,
$1\leq k\leq n_b$ such that $kb\leq x$ and $kb\not\leq y$.
Applying the dual of \cite[Theorem 3.3]{ZR65}
there exists an atom $a$ from $E$ and an integer $l$,
$1\leq l\leq n_a$ such that $y\leq (la)'$ and $kb\not\leq (la)'$.
Clearly, $x\in [kb, 1]$, $y\in [0, (la)']$.

Assume that there is an element $z\in E$ such that
$z\in [kb, 1]\cap [0, (la)']$. Then $kb \leq z\leq  (la)'$,
a contradiction. Hence by Proposition \ref{clopenint}, $[kb, 1]$ and $[0, (la)']$ are disjoint
open subsets separating $x$ and $y$.
\end{proof}

\begin{theorem}\label{compger}
Let $E$ be an almost
orthogonal Archimedean atomic lattice effect algebra.
Then $E$ is compactly generated and therefore (o)-continuous.
\end{theorem}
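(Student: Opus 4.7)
The plan is to reduce compact generation of $E$ to showing that each finite element $ka$, where $a$ is an atom of $E$ and $1\le k\le n_a$, is compact. By Theorem \ref{Theorem3.3} every nonzero $x\in E$ has a representation $x=\bigvee\{k_\alpha a_\alpha\mid\alpha\in\mathcal{E}\}$ with atoms $a_\alpha$, so once compactness of all such finite elements is established, every element of $E$ is a join of compact elements, which is exactly compact generation.

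To prove compactness of a given $ka$, I would take an arbitrary $D\subseteq E$ for which $\bigvee D$ exists in $E$ and satisfies $ka\le\bigvee D$, and direct the set $\mathcal{F}$ of finite subsets of $D$ by inclusion. The net $(\bigvee F)_{F\in\mathcal{F}}$ is monotone increasing with supremum $\bigvee D$, so taking $u_F=\bigvee F$ and the constant $v_F=\bigvee D$ in the definition of (o)-convergence we get $\bigvee F\stackrel{(o)}{\rightarrow}\bigvee D$. Lemma \ref{pomlem}(iv) then converts this into interval-topology convergence, $\bigvee F\stackrel{\tau_{i}}{\rightarrow}\bigvee D$.

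The key topological input is that $[ka,1]$ is open (indeed clopen) in $\tau_i$. This is precisely what is shown in the proof of Proposition \ref{clopenint}, using the self-duality of almost orthogonality for Archimedean atomic lattice effect algebras (which is why the argument of Theorem \ref{haus} could invoke openness of $[kb,1]$). Since $\bigvee D\in[ka,1]$, the net $(\bigvee F)_{F\in\mathcal{F}}$ is eventually in this open set, producing a finite $F_0\subseteq D$ with $ka\le\bigvee F_0$. Hence $ka$ is compact, so $E$ is compactly generated. The (o)-continuity of $E$ then follows from compact generation by the argument of \cite{PR4}.

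The main obstacle is conceptual rather than computational: one needs to recognise that openness of the ray $[ka,1]$ in $\tau_i$, already available from Proposition \ref{clopenint}, is exactly the right topological content to convert the order-theoretic definition of compactness of $ka$ into a question about the monotone net of finite joins of subsets of $D$. Once this is seen, no further algebra on $E$ is required: the net argument, combined with Lemma \ref{pomlem}(iv), closes the proof in one step.
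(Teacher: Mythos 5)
Your proof is correct. Both you and the paper reduce the claim to showing that each element $la$ ($a$ an atom, $1\le l\le n_a$) is compact, and both ultimately rest on the finite decomposition $E=[0,(la)']\cup\left(\bigcup_{k=1}^{n}[j_kb_k,1]\cup[(n_a+1-l)a,1]\right)$ of Lemma \ref{almostort}; the difference lies in how that decomposition is exploited. The paper passes to the complemented net $x'_\alpha\downarrow x'$ and runs an explicit pigeonhole argument: one piece of the finite cover must contain $x'_\beta$ for a cofinal set of indices, and every piece other than $[0,(la)']$ forces $x'\notin[0,(la)']$, a contradiction. You instead package the decomposition as the single topological fact that $[la,1]$ is $\tau_i$-open (the dual half established inside the proof of Proposition \ref{clopenint}) and combine it with Lemma \ref{pomlem}(iv): the increasing net of finite joins of $D$ $(o)$-converges, hence $\tau_i$-converges, to $\bigvee D\in[la,1]$, so it eventually enters that open set, producing the required finite $F_0$. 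This buys a shorter, more conceptual argument that reuses already-proved facts and avoids repeating the case analysis; it is also slightly more careful than the paper in making explicit that compactness is tested against the net of finite joins of $D$ rather than an arbitrary net. The concluding appeal to \cite{PR4} for $(o)$-continuity matches the paper's own (implicit) justification.
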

\begin{proof} It is enough to check that, for any atom
$a\in E$ and any integer $l$,
$1\leq l\leq n_a$ the element $la$ is compact in $E$ since
any element of $E$ is a join of such elements (see
Theorem \ref{Theorem3.3} resp. \cite[Theorem 3.3]{ZR65}).

Let $x=\bigvee_{\alpha\in\mathcal{E}} x_\alpha$ for some net
$(x_\alpha)_{\alpha\in\mathcal{E}}$ in $E$, $la\leq x$, i.e.,
$(la)'\geq x'\downarrow x^{'}_\alpha$.

By Lemma \ref{almostort} we have
$E=[0, (la)']\cup \left(\bigcup_{k=1}^{n}[j_k{}b_k, 1]\cup %
[(n_a+1-l)a, 1]\right)$,
$[0, (la)']\cap \left(\bigcup_{k=1}^{n}[b_k, 1]\cup %
[(n_a+1-l)a, 1]\right)=\emptyset$,
$b_1, \dots, b_n$ are atoms of $E$,
$1\leq j_k\leq n_{b_k}$, $1\leq k \leq n$.

Since ${\mathcal{E}}$
is directed upwards, there
exists a cofinal subset
$\mathcal{E}'\subseteq\mathcal{E}$ such that
$x^{'}_{\beta}\in [0, (la)']$ for all $\beta\in\mathcal{E}'$
or there exists $k_0\in \{1,2,...,n\}$ such that
$x_{\beta}\in [j_{k_0}b_{k_0}, 1]$ for all $\beta\in\mathcal{E}'$ or
$x^{'}_{\beta}\in [(n_a+1-l)a, 1]$ for all $\beta\in\mathcal{E}'$.
If $x^{'}_{\beta}\in [0, (la)']$ for all $\beta\in\mathcal{E}'$ then
clearly $la\leq x_{\beta}$ for all $\beta\in\mathcal{E}'$.
If there exists $k_0\in \{1,2,...,n\}$ such that
$x^{'}_{\beta}\in [j_{k_0}b_{k_0}, 1]$ for all $\beta\in\mathcal{E}'$ or
$x^{'}_{\beta}\in [(n_a+1-l)a, 1]$ for all $\beta\in\mathcal{E}'$
we obtain that $x'\in [j_{k_0}b_{k_0}, 1]$ or $x^{'}\in [(n_a+1-l)a, 1]$
which is a contradiction with $x^{'}\in [0, (la)']$.
\end{proof}

Let $E$ be an Archimedean atomic lattice effect algebra.
We put ${\mathcal U}=\{x\in E\mid x=\bigvee_{i=1}^{n} l_i a_i,
a_1, \dots, a_n\ \mbox{are atoms of }\ E, 1\leq l_i\leq n_{a_i},
1\leq i\leq n, n \ \mbox{natural}$\ $\mbox{number}\}$ and
${\mathcal V}=\{x\in E\mid x'\in {\mathcal U}\}$. Then by
\cite[Theorem 3.3]{ZR65}, for every $x\in L$, we have
that
$$
x=\bigvee\{u\in {\mathcal U}\mid u\leq x\}=%
\bigwedge\{v\in {\mathcal V}\mid x\leq v\}.
$$

Consider the function family
$\Phi=\{f_{u}\mid u\ \in {\mathcal U}\}\cup
\{g_{v}\mid v\in {\mathcal V}\}$, where
$f_{u}, g_v:L\to\{0,1\}$, $u\in {\mathcal U}, v\in {\mathcal V}$
are defined by putting
$f_u(x)=\left\{\begin{array}{r c l}
                  1& \hbox{iff}&u\leq x\\
                  0& \hbox{iff}&u\not\leq x
                \end{array}
         \right.       $ \\
and
$g_{v}(y)=\left\{\begin{array}{r c l}
                  1& \hbox{iff}&x\leq v\\
                  0& \hbox{iff}&x\not\leq v
                \end{array}
         \right.$ \quad for all $x, y\in L$.

Further, consider the family of pseudometrics on $L$:
$\Sigma_\Phi=\{\rho_{u}\mid u\in {\mathcal U}\}\cup
\{\pi_{v}\mid v\in {\mathcal V}\}$, where
$\rho_{u}(a,b)=|f_{u}(a)-f_{u}(b)|$ and
$\pi_{v}(a,b)=|g_{v}(a)-g_{v}(b)|$
for all $a,b\in L$.

Let us denote by $\mathcal U_\Phi$ the uniformity
on $L$ induced by the family of pseudometrics $\Sigma_\Phi$
(see e.g. \cite{csaszar}).
Further denote by $\tau_\Phi$ the topology compatible with the
uniformity $\mathcal U_\Phi$.

Then for every net
$(x_\alpha)_{\alpha\in\mathcal E}$ of elements of $L$
$$
\begin{array}{l}
x_\alpha\tophi x \hbox{ iff }
\varphi(x_\alpha)\to \varphi(x)\hbox{ for any }
\varphi\in\Phi.
\end{array}
$$

This implies, since $f_u$, $u\in {\mathcal U}$, and $g_v$, $v\in {\mathcal V}$, is a separating function family on $L$, that the topology $\tau_{\Phi}$ is Hausdorff.
Moreover, the intervals $[u, v]=[u, 1]\cap [0, v]=f_u^{-1}(\{1\})\cap g_v^{-1}(\{1\})$
are clopen sets in  $\tau_{\Phi}$.

\begin{definition}\label{deftauphi}
Let $E$ be an Archimedean atomic lattice effect algebra. Let
$\Phi$ be a separating function family on $E$ defined above.
We will denote by $\tau_{\Phi}$ the uniform topology on $E$
defined by this function family, that means for every net
$(x_\alpha)_{\alpha\in\mathcal E}$ of elements of $L$
$$
\begin{array}{l}
x_\alpha\tophi x\ \hbox{ iff }\
\varphi(x_\alpha)\to \varphi(x)\hbox{ for any }
\varphi\in\Phi.
\end{array}
$$
\end{definition}

\begin{theorem}\label{cahrh}
Let $E$ be an almost
orthogonal Archimedean atomic lattice effect algebra.
Then $\tau_i=\tau_o=\tau_{\Phi}$.
\end{theorem}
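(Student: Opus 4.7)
The plan is to split the equality into two assertions, $\tau_o=\tau_i$ and $\tau_i=\tau_\Phi$, and deal with them separately. The first is essentially free: by Theorem~\ref{haus}, $\tau_i$ is Hausdorff on $E$, and Lemma~\ref{pomlem}(v) then yields $\tau_o=\tau_i$ at once. So the real work is the inclusion $\tau_i=\tau_\Phi$, which I would prove by showing the two inclusions separately, using in each direction that the relevant half-intervals are clopen in the opposite topology.

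For $\tau_\Phi\subseteq\tau_i$, I would show that every generator $f_u$ ($u\in\mathcal{U}$) and $g_v$ ($v\in\mathcal{V}$) of the function family $\Phi$ is $\tau_i$-continuous, which reduces to showing $f_u^{-1}(\{1\})=[u,1]$ and $g_v^{-1}(\{1\})=[0,v]$ are $\tau_i$-clopen. Writing $u=\bigvee_{i=1}^{n}l_ia_i$ with atoms $a_i$ and $1\le l_i\le n_{a_i}$ (by the definition of $\mathcal{U}$), one has $[u,1]=\bigcap_{i=1}^{n}[l_ia_i,1]$, and each $[l_ia_i,1]$ is $\tau_i$-clopen by the dual of Lemma~\ref{almostort} (this dualisation is exactly the step already used in the proof of Proposition~\ref{clopenint}). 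Finite intersections of clopen sets are clopen, so $[u,1]$ is $\tau_i$-clopen. The argument for $g_v$ is dual: writing $v'=\bigvee_{i=1}^{n}l_ia_i$, one has $[0,v]=\bigcap_{i=1}^{n}[0,(l_ia_i)']$, a finite intersection of sets that are $\tau_i$-clopen by Lemma~\ref{almostort}.

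For the reverse inclusion $\tau_i\subseteq\tau_\Phi$, it suffices to verify that the closed subbasis of $\tau_i$ consisting of the intervals $[a,b]$ lies in the $\tau_\Phi$-closed sets. Here I would use the representation from Theorem~\ref{Theorem3.3}, namely $a=\bigvee\{u\in\mathcal{U}\mid u\le a\}$ and $b=\bigwedge\{v\in\mathcal{V}\mid b\le v\}$, which gives
\[
[a,1]=\bigcap\{[u,1]\mid u\in\mathcal{U},\,u\le a\},\qquad [0,b]=\bigcap\{[0,v]\mid v\in\mathcal{V},\,b\le v\}.
\]
Each $[u,1]=f_u^{-1}(\{1\})$ and $[0,v]=g_v^{-1}(\{1\})$ is $\tau_\Phi$-clopen by the very definition of $\tau_\Phi$, so $[a,b]=[a,1]\cap[0,b]$ is $\tau_\Phi$-closed. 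Hence $\tau_i\subseteq\tau_\Phi$ and the three topologies coincide. The only delicate point is ensuring that the clopen-interval result of Proposition~\ref{clopenint} is enough to handle arbitrary generators of $\mathcal{U}$ and $\mathcal{V}$; this is handled by the observation that these generators are by definition \emph{finite} joins of atom multiples, so a single finite-intersection step suffices in each direction.
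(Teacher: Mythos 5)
Your proof is correct, but it reaches the equality $\tau_i=\tau_\Phi$ by a genuinely different (and more self-contained) route than the paper. Both arguments begin the same way: Theorem~\ref{haus} plus the Ern\'e--Weck result (Lemma~\ref{pomlem}(v), \cite{erne1}) gives $\tau_i=\tau_o$. From there the paper works with $\tau_o$: it shows $\tau_o\subseteq\tau_\Phi$ by quoting \cite{ZR19} to produce, inside any $\tau_o$-neighbourhood of $x$, an interval $[\bigvee F,\bigwedge G]$ with $F\subseteq\mathcal U$, $G\subseteq\mathcal V$ finite (such intervals being $\tau_\Phi$-clopen), and then obtains the reverse inclusion from compact generation (Theorem~\ref{compger}) together with \cite[Theorem 1]{PR4}. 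You instead compare $\tau_i$ and $\tau_\Phi$ directly at the level of (sub)bases: $\tau_\Phi\subseteq\tau_i$ because each generator $f_u$, $g_v$ is $\tau_i$-continuous, the sets $[u,1]$ and $[0,v]$ being \emph{finite} intersections of the half-intervals shown to be $\tau_i$-clopen in Lemma~\ref{almostort} and Proposition~\ref{clopenint}; and $\tau_i\subseteq\tau_\Phi$ because each subbasic closed interval $[a,b]=[a,1]\cap[0,b]$ is an arbitrary intersection of the $\tau_\Phi$-clopen sets $f_u^{-1}(\{1\})$, $g_v^{-1}(\{1\})$ via the representation of Theorem~\ref{Theorem3.3}. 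What each approach buys: yours avoids the external citations \cite{ZR19} and \cite[Theorem 1]{PR4} and does not need order-continuity or compact generation at all, resting only on machinery already proved in the paper; the paper's version, by routing the comparison through $\tau_o$ and compact generation, records the inclusion $\tau_o\subseteq\tau_\Phi$ by an argument that does not use Hausdorffness of $\tau_i$ and ties the theorem into the uniform-topology framework of \cite{PR4}. The one point worth making explicit in your write-up is that $\tau_\Phi$, being the topology of the uniformity induced by the pseudometrics $|\varphi(a)-\varphi(b)|$ with $\varphi$ two-valued, is indeed the coarsest topology making all $\varphi\in\Phi$ continuous into the discrete space $\{0,1\}$; this is what licenses the reduction of $\tau_\Phi\subseteq\tau_i$ to the clopenness of the sets $\varphi^{-1}(\{1\})$.
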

\begin{proof} Since by Theorem \ref{haus}, $\tau_i$ is Hausdorff
we obtain by \cite{erne1} that $\tau_i=\tau_o$.
Further if $O\in\tau_o$ and $x\in O$ then by Theorem \ref{Theorem3.3}
we have
$x=\bigvee\{u\in {\mathcal U}\mid u\leq x\}=%
\bigwedge\{v\in {\mathcal V}\mid x\leq v\}$, which
by \cite{ZR19} implies that there exist finite sets $F\subseteq {\mathcal U}$,
$G\subseteq {\mathcal V}$ such that $x\in [\bigvee F, \bigwedge G]\subseteq O$.
Hence $\tau_o\subseteq \tau_{\Phi}$. By Theorem \ref{compger} and
 \cite[Theorem 1]{PR4} we obtain $\tau_o=\tau_{\Phi}$.
\end{proof}

\begin{theorem}\label{semicahrh}
Let $E$ be an Archimedean atomic block-finite lattice effect algebra. Then
 $\tau_i=\tau_o$ is a Hausdorff topology.
\end{theorem}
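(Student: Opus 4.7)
The plan is to establish that the interval topology $\tau_i$ on $E$ is Hausdorff; the equality $\tau_o=\tau_i$ then follows immediately from Lemma~\ref{pomlem}(v). The central idea is to exploit the finite block decomposition $E=\bigcup_{j=1}^n M_j$, where each $M_j$ is an MV-effect algebra.

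First I would verify that each block $M_j$ is an atomic Archimedean MV-effect algebra. The Archimedean property is inherited from $E$ since $M_j$ is a sub-effect algebra. Atomicity of $M_j$ follows from atomicity of $E$ together with block-finiteness: given $x\in M_j$ with $x>0$, an atom $a\in E$ with $a\le x$ must lie in $M_j$, else $M_j\cup\{a\}$ would be a strictly larger pairwise-compatible set, contradicting the maximality of the block $M_j$. Since $M_j$ is an MV-effect algebra, any two distinct atoms $a\ne b$ of $M_j$ are automatically orthogonal: compatibility forces $a\wedge b\in\{0,a,b\}$, and atomicity of $M_j$ rules out $a\wedge b\in\{a,b\}$, so $a\wedge b=0$ and $a\le b'$. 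Therefore each $M_j$ is an almost orthogonal Archimedean atomic lattice effect algebra, and Theorem~\ref{haus} yields that $\tau_i^{M_j}$ is Hausdorff.

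Given distinct $x,y\in E$ with $x\not\le y$ (WLOG), I would mimic the argument of Theorem~\ref{haus} to choose, via Theorem~\ref{Theorem3.3} and its dual, atoms $a,b\in E$ and positive integers $k,l$ with $kb\le x$, $kb\not\le y$, $y\le(la)'$, and $kb\not\le(la)'$. To separate $x$ and $y$ in $\tau_i$ it suffices to cover $E$ by finitely many closed intervals, each disjoint from $\{x\}$ or from $\{y\}$. This is done block-by-block. For every block $M_j$ containing $a$, Lemma~\ref{almostort} applied inside $M_j$ supplies a finite decomposition $M_j=[0,(la)']_{M_j}\cup\bigcup_i[c_i,1]_{M_j}\cup[(n_a+1-l)a,1]_{M_j}$ with every $c_i\not\le(la)'$; the corresponding $E$-intervals $[0,(la)']_E$ and $[c_i,1]_E$ miss $x$ (since $x\not\le(la)'$) and $y$ respectively, the latter because $c_i\le y\le(la)'$ would contradict $c_i\not\le(la)'$. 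For a block $M_j$ not containing $a$, the Hausdorffness of $\tau_i^{M_j}$ together with Lemma~\ref{almostort} applied in $M_j$ to a suitable atom of $M_j$ provides an analogous finite family of $M_j$-intervals whose $E$-extensions still miss $x$ or $y$.

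The main obstacle is the treatment of the blocks $M_j$ not containing $a$: one must pick the atoms of $M_j$ carefully (splitting into subcases according to whether $x$ or $y$ belongs to $M_j$) so that the intervals $[c,d]_{M_j}$, when promoted to $[c,d]_E$, continue to avoid the prescribed point. This is where the block-finite hypothesis is essential, as it keeps the number of such blocks finite and allows the $E$-cover to be assembled from the block-wise covers in finitely many steps. Once this is done the finite union of all block covers is a finite family of closed intervals covering $E$, each disjoint from $\{x\}$ or $\{y\}$; complementing appropriately yields disjoint $\tau_i$-open neighborhoods of $x$ and $y$, which proves that $\tau_i$ is Hausdorff and completes the proof.
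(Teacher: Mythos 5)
Your overall strategy is the same as the paper's: decompose $E$ into finitely many blocks, observe that each block is an almost orthogonal (since MV) Archimedean atomic lattice effect algebra, and assemble a finite cover of $E$ by closed intervals none of which contains both $x$ and $y$. However, there are two genuine gaps.

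First, your argument that each block $M_j$ is atomic is incorrect. You claim that an atom $a\in E$ with $a\le x\in M_j$ must lie in $M_j$ because otherwise $M_j\cup\{a\}$ would be a larger pairwise-compatible set; but $a\le x$ together with $x\leftrightarrow z$ for all $z\in M_j$ does not give $a\leftrightarrow z$ (compatibility does not pass down to elements below $x$). Indeed the paper explicitly warns in the introduction that an atomic lattice effect algebra can have non-atomic blocks, so any proof of atomicity of the blocks that does not use block-finiteness must be wrong. The paper instead cites Mosn\'a's result that a block-finite atomic Archimedean lattice effect algebra is a union of finitely many \emph{atomic} blocks; this is exactly where that hypothesis first enters.

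Second, and more importantly, you leave unresolved precisely the step that carries the proof: covering a block $M_j$ that does not contain $x$ (or $y$) by intervals avoiding $x$ (or $y$). You name this as ``the main obstacle'' and assert that ``a suitable atom'' of $M_j$ exists, but you never produce it. The paper's resolution is the following incompatibility trick: if $x\notin M_i$, then by maximality of the block there is an atom $a_i\in M_i$ \emph{not compatible} with $x$; Lemma~\ref{almostort} applied inside the MV-effect algebra $M_i$ gives $M_i\subseteq[0,(a_i)']\cup[n_{a_i}a_i,1]$, and $x$ lies in neither interval, since $x\le (a_i)'$ or $a_i\le n_{a_i}a_i\le x$ would each force $x\leftrightarrow a_i$. (The case $x,y\in M_i$ is handled by choosing $l_ia_i\le x$, $l_ia_i\not\le y$ with $a_i$ an atom of $M_i$, using atomicity of the block.) Without this argument the proof is not complete; with it, your fixed global choice of the atoms $a,b$ becomes unnecessary, since the separating atom can be chosen block by block as the paper does.
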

\begin{proof} As in \cite{ZR34}, it suffices to show that for every
$x, y\in E$, $x\not\leq y$ there are finitely many intervals, none of
which contains both $x$ and $y$ and the union of which covers $E$.

By \cite{mosna}, $E$ is a union of finitely many
atomic blocks $M_i$, $i=1, 2, \dots, n$. Choose
$i\in \{ 1, 2, \dots, n\}$. If $x, y\in M_i$ then
there is an atom $a_i\in M_i$ and an integer $l_i$, $1\leq l_i \leq n_{a_i}$
such that $l_i a_i\leq x$, $l_i a_i\not\leq y$.
Let us put $k_i=n-l_i+1$. Since $M_i$ is almost orthogonal
(the only possible non-orthogonal $kb$ to $la$ for  an atom $a$, $1\leq l\leq n_a$ is that
$a=b$)
we have by Lemma \ref{almostort} that
$M_i=([0, (k_i a_i)']\cap M_i) \cup  %
([(n_{a_i}+1-k_i)a_i, 1]\cap M_i)$.
Hence $M_i\subseteq [0, (k_i a_i)'] \cup  %
[(n_{a_i}+1-k_i)a_i, 1]$. Let us check that
$[0, (k_i a_i)'] \cap  %
[(n_{a_i}+1-k_i)a_i, 1]=\emptyset$.
Assume that $(n_{a_i}+1-k_i)a_i\leq z\leq (k_i a_i)'$. Then
$(n_{a_i}+1-k_i)a_i\leq (k_i a_i)'$, a contradiction.
Put $J_i=[0, (k_i a_i)']$, $K_i=[(n_{a_i}+1-k_i)a_i, 1]$.
This yields $x\in K_i$, $y\in J_i$, $M_i\subseteq J_i\cup K_i$ and $J_i\cap K_i=\emptyset$.
Let $x\not\in M_i$. Then there exists an atom $a_i\in M_i$
that is not compatible with $x$. Let us check that
$x\not\in [0, (a_i)'] \cup  %
[n_{a_i} a_i, 1]$. Assume that
$x\in [0, (a_i)']$ or  %
$x\in [n_{a_i} a_i, 1]$. Then
$x\leq (a_i)'$ or $a_i\leq n_{a_i} a_i \leq x$, i.e., in both
cases we get that  $x\comp a_i$, a contradiction.
Let us put $J_i=[0, (a_i)']$, $K_i=[n_{a_i}a_i, 1]$. As above,
$M_i\subseteq J_i\cup K_i$, $J_i\cap K_i=\emptyset$ and moreover
$x\notin J_i\cup K_i$. The remaining case $y\not\in M_i$ can be
checked by similar considerations. We obtain
$E=\bigcup_{i=1}^{n} M_i\subseteq \bigcup_{i=1}^{n} (J_i\cup K_i)\subseteq E$
and none of the intervals $J_i, K_i$, $i=1, 2, \dots, n$ contains both
$x$ and $y$.
\end{proof}


\section{Order and interval topologies
of complete atomic block-finite lattice effect algebras}

We are going to show that on every complete atomic block-finite
lattice effect algebra $E$ the interval topology is Hausdorff. Hence
both topologies $\tau_i$ and $\tau_o$ are in this case compact Hausdorff
and they coincide. Moreover, a necessary and sufficient condition for a complete
atomic lattice algebra $E$ to be almost orthogonal is given.

For the proof of Theorems \ref{convblokse} and \ref{blockfin}
we will use the following statement, firstly
proved in the equivalent setting of D-posets in \cite{ZR41}.

\begin{theorem}{\rm{}\cite[Theorem 1.7]{ZR41}}\label{demonst} Suppose that
$(E;\oplus, 0, 1)$ is a complete lattice effect algebra. Let
$\emptyset\not =D\subseteq E$ be a sub-lattice
effect algebra.
The following conditions are equivalent:
\begin{enumerate}
\item[\mbox{\rm(i)}]  For all nets $(x_\alpha)_{\alpha\in\mathcal{E}}$ such that
$x_\alpha\in D$ for all ${\alpha\in\mathcal{E}}$
$$
x_\alpha\too x \ \mbox{in}\ E \ \mbox{if and only if} \
x\in D\ \mbox{and}\ x_\alpha\too x \ \mbox{in}\ D.
$$
\item[\mbox{\rm(ii)}]  For every $M\subseteq D$ with
$\bigvee M=x$ in $E$ it holds $x\in D$.

\item[\mbox{\rm(iii)}]  For every $Q\subseteq D$ with
$\bigwedge Q=y$ in $E$ it holds $y\in D$.

\item[\mbox{\rm(iv)}] $D$ is a complete sub-lattice of $E$.

\item[\mbox{\rm(v)}]  $D$ is a closed set in order topology
$\tau_o$ on $E$.
\end{enumerate}
Each of these conditions implies that $\tau_o^{D}=\tau_o^{E}\cap D$,
where $\tau_o^{D}$ is an order topology on $D$.
\end{theorem}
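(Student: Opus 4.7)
The plan is to establish the cycle (iv) $\Rightarrow$ (i) $\Rightarrow$ (ii) $\Leftrightarrow$ (iii) $\Rightarrow$ (iv), then show (ii) $\Leftrightarrow$ (v) via Lemma~\ref{pomlem}(i), and finally derive $\tau_o^D = \tau_o^E \cap D$ from (i).

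I would begin with the lattice-theoretic equivalences (ii) $\Leftrightarrow$ (iii) and (ii) $\wedge$ (iii) $\Leftrightarrow$ (iv). Because $D$ is a sub-effect algebra it is closed under the order-reversing involution $x \mapsto x'$; together with the de Morgan identity $(\bigvee M)' = \bigwedge\{m' \mid m \in M\}$ (valid in any lattice effect algebra whenever either side exists), this renders (ii) and (iii) interchangeable, and their conjunction is by definition the property that $D$ is a complete sub-lattice of $E$.

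The heart of the argument is (iv) $\Rightarrow$ (i). Given a net $(x_\alpha)_{\alpha \in \mathcal{E}} \subseteq D$ with $x_\alpha \too x$ in $E$, I would form the canonical witness nets
\begin{equation*}
u_\alpha = \bigwedge_{\beta \ge \alpha} x_\beta, \qquad v_\alpha = \bigvee_{\beta \ge \alpha} x_\beta,
\end{equation*}
which exist in $E$ by completeness. Comparing them with the defining nets $u'_\alpha \uparrow x$, $v'_\alpha \downarrow x$ provided by the hypothesis, monotonicity of $u'_\alpha$ yields $u'_\alpha \le x_\beta$ for every $\beta \ge \alpha$, hence $u'_\alpha \le u_\alpha$, while $u_\alpha \le x_\alpha \le v'_\alpha$ supplies the reverse squeeze; this forces $u_\alpha \uparrow x$, and dually $v_\alpha \downarrow x$. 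By (iv), each $u_\alpha, v_\alpha$ lies in $D$ and so does their common sup/inf $x$; the same witnesses then verify $x_\alpha \too x$ inside $D$. The converse direction of (i) is trivial, since any witness nets in $D$ serve as witnesses in $E$.

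The remaining implication (i) $\Rightarrow$ (ii) is routine: given $M \subseteq D$ with $\bigvee M = x$ in $E$, let $F$ range over the finite subsets of $M$ directed by inclusion and consider the net $y_F = \bigvee F$, which lies in $D$ (a sub-lattice), is monotone up, and satisfies $y_F \uparrow x$, so $y_F \too x$ in $E$; (i) then forces $x \in D$. The equivalence of (v) with the others is immediate from Lemma~\ref{pomlem}(i). The concluding statement $\tau_o^D = \tau_o^E \cap D$ follows by combining (i) with Lemma~\ref{pomlem}(i) applied in both $D$ and $E$: a set $G \subseteq D$ is $\tau_o^D$-closed iff it is closed under (o)-limits in $D$, which by (i) is the same as being of the form $G = D \cap C$ for some (o)-closed, equivalently $\tau_o^E$-closed, $C \subseteq E$. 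The delicate step is the squeeze in (iv) $\Rightarrow$ (i), where the roles of monotonicity of the defining nets, sub-lattice closure, and completeness of the ambient $E$ must be carefully separated.
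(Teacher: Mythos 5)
The paper does not prove this theorem at all: it is imported verbatim as \cite[Theorem 1.7]{ZR41}, so there is no in-paper argument to compare against. Judged on its own, your proof is correct and is the natural one. The cycle (iv) $\Rightarrow$ (i) $\Rightarrow$ (ii) $\Leftrightarrow$ (iii) $\Leftrightarrow$ (iv), the de Morgan transfer between (ii) and (iii) via the order-reversing involution $x\mapsto x'$ (under which $D$ is closed because it is a sub-effect algebra), the canonical witnesses $u_\alpha=\bigwedge_{\beta\ge\alpha}x_\beta$, $v_\alpha=\bigvee_{\beta\ge\alpha}x_\beta$ with the squeeze against the hypothesized witness nets, and the finite-join net for (i) $\Rightarrow$ (ii) all check out. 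Two places deserve one more explicit line. First, the ``trivial'' converse half of (i) is not trivial without (iv): a join computed in the sublattice $D$ can a priori exceed the join computed in $E$, and you need (iv) to identify $\bigvee_D u_\alpha$ with $\bigvee_E u_\alpha$ (an upper bound in $D$ that equals the $E$-join forces the two to coincide). Second, for $\tau_o^D=\tau_o^E\cap D$, avoid phrasing the argument through the closure operator in $E$ (the set of $(o)$-limits of nets from $G$ need not be the topological closure); the clean route, which your sentence essentially contains, is to show via (i) and Lemma~\ref{pomlem}(i) that a $\tau_o^D$-closed $G\subseteq D$ is already $\tau_o^E$-closed, so $G=G\cap D$ exhibits it as a trace of an $E$-closed set, while the reverse inclusion is immediate. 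Neither point is a gap, only a spot where the wording outruns the justification.
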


Important sub-lattice effect algebras are blocks, $S(E)$,
$B(E)=\bigcap\{M\subseteq E\mid M\ \mbox{block\/  of}\  E\}$ and
$C(E)=B(E)\cap S(E)$ (see \cite{GrFoPu,gudder1,Kop2,ZR51,ZR56}).

\begin{theorem}\label{convblokse} Let $E$ be a complete lattice effect algebra.
 Then for every $D\in \{S(E), C(E), B(E)\}$ or $D=M$, where $M$ is a block
of $E$, we have:
\begin{enumerate}
\item[\mbox{(1)}]
\begin{tabular}{l l}
$x_\alpha\stackrel{\tau^{E}_{i}}{\rightarrow} x$
$\Longleftrightarrow$
$x_\alpha\stackrel{\tau^{D}_{i}}{\rightarrow} x$,& for all nets
$(x_\alpha)_{\alpha\in\mathcal{E}}$ in $D$ and all $x\in D$.
\end{tabular}
\item[\mbox{(2)}] If $\tau_i$ is Hausdorff then
 \begin{center}
\begin{tabular}{l l}
$x_\alpha\stackrel{\tau^{E}_{i}}{\rightarrow} x$
$\Longleftrightarrow$
$x_\alpha\stackrel{\tau^{D}_{i}}{\rightarrow} x$,& for all nets
$(x_\alpha)_{\alpha\in\mathcal{E}}$ in $D$ and all $x\in E$.
\end{tabular}
\end{center}
\end{enumerate}
\end{theorem}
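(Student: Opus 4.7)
The plan is to reduce everything to Theorem \ref{compconvti} and Theorem \ref{demonst}. The key preliminary observation is that in each of the four cases $D\in\{S(E),C(E),B(E)\}$ or $D=M$ a block, the set $D$ is a complete sub-lattice effect algebra of the complete lattice effect algebra $E$. For blocks this is the classical result from blocks theory (see \cite{ZR56,Kop2}); for $S(E)$ it is shown in \cite{ZR57}; for $B(E)$ and $C(E)$ the reference \cite{GrFoPu,gudder1,ZR51} gives the required completeness. I would open the proof by simply stating this and invoking these references rather than re-deriving them.

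Once $D$ is identified as a complete sub-lattice of $E$, part (1) is an immediate application of Theorem \ref{compconvti}(b): with $L=E$ and $F=D$ we get $\tau^{D}_{i}=\tau_{i}\cap D$ and, for every net $(x_{\alpha})_{\alpha\in\mathcal{E}}$ in $D$ and every $x\in D$, the equivalence
\[
x_\alpha\stackrel{\tau^{E}_{i}}{\rightarrow} x
\ \Longleftrightarrow\
x_\alpha\stackrel{\tau^{D}_{i}}{\rightarrow} x.
\]
This occupies essentially one line once the preliminary identification of $D$ is in place.

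For part (2) the additional input is the hypothesis that $\tau_{i}$ on $E$ is Hausdorff. By Lemma \ref{pomlem}(v) this gives $\tau_{o}=\tau_{i}$ on $E$. Since $D$ is a complete sub-lattice effect algebra of $E$, Theorem \ref{demonst}, in particular the equivalence (iv)$\Leftrightarrow$(v), asserts that $D$ is $\tau_{o}$-closed in $E$, hence also $\tau_{i}$-closed. Therefore any $\tau_{i}^{E}$-limit of a net lying in $D$ must itself belong to $D$. Thus, given a net $(x_{\alpha})$ in $D$ with $x_{\alpha}\stackrel{\tau_{i}^{E}}{\rightarrow}x$ for some $x\in E$, we conclude $x\in D$, and part (1) finishes the argument; the converse direction is trivial since the right-hand side already forces $x\in D\subseteq E$, after which part (1) applies.

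The only non-routine point, and the one I would be most careful about, is the verification that each of $S(E)$, $B(E)$, $C(E)$ and every block $M$ is genuinely a complete sub-lattice (not merely a sub-lattice effect algebra) of $E$. All four facts are in the cited literature, but because the proof is otherwise a two-line deduction, the entire strength of the statement is concentrated there; I would make sure to pinpoint the exact reference for each $D$ so that Theorems \ref{compconvti} and \ref{demonst} can be invoked without any further work.
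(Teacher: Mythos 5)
Your proposal is correct and follows essentially the same route as the paper: part (1) via Theorem \ref{compconvti} together with the fact that $M$, $S(E)$, $C(E)$ and $B(E)$ are complete sub-lattices of a complete lattice effect algebra (the paper cites \cite{ZR58,ZR60} for this), and part (2) via Hausdorffness giving $\tau_i=\tau_o$ (by \cite{erne1}) combined with the closedness of $D$ from Theorem \ref{demonst}. The only discrepancy is bibliographic: the completeness of these sub-lattices is attributed in the paper to \cite{ZR58,ZR60} rather than to the references you list, but this does not affect the argument.
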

\begin{proof} The first part of the statement follows
by Theorem \ref{compconvti} and
the fact that if $E$ is  a complete lattice effect algebra
 then $M$, $S(E)$, $C(E)$ and $B(E)$  are complete sub-lattices of $E$
(see \cite{ZR58,ZR60}). The second part follows by \cite{erne1} since
$\tau_i$ is Hausdorff implies $\tau_i=\tau_o$ and by Theorem \ref{demonst}.
\end{proof}

\begin{theorem}\label{blockfin} \mbox{\rm(i)} The interval topology
$\tau_i$ on every Archimedean atomic MV-effect algebra $M$ is
Hausdorff and $\tau_i=\tau_o=\tau_{\Phi}$.

\noindent{}\mbox{\rm(ii)} For every complete  atomic MV-effect algebra $M$
and for any net $(x_{\alpha})$ of $M$ and any $x\in M$,
\begin{center}
$x_{\alpha}\totau x$ if and only if
$x_{\alpha}\too x$\quad (briefly $\tau_o\equiv (o)$).
\end{center}
Moreover, $\tau_o$ is a uniform compact Hausdorff topology on $M$.

\noindent{}\mbox{\rm(iii)} For every  atomic block-finite
lattice effect algebra $E$, $E$ is a complete lattice
iff $\tau_i=\tau_o$ is a compact Hausdorff topology.
\end{theorem}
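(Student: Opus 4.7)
The plan is to derive parts (i) and (ii) by first verifying that every Archimedean atomic MV-effect algebra is almost orthogonal, which then triggers Theorems \ref{haus}, \ref{compger}, and \ref{cahrh} directly, and to derive part (iii) from Theorem \ref{semicahrh} together with Frink's characterization of lattice completeness via interval topology (Lemma \ref{pomlem}(vi)).

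For part (i), the key observation is that in an MV-effect algebra $M$ every pair of elements is compatible. Thus, for distinct atoms $a, b \in M$ one has $a \leftrightarrow b$, and the atom characterization (an atom $a$ is compatible with $x$ iff $a \le x$ or $a \le x'$) forces either $a \le b$ or $a \le b'$. The case $a \le b$ would yield $a = b$, since any nonzero element strictly below an atom contradicts atomicity; hence $a \le b'$, equivalently $b \le a'$. Consequently, for each atom $a \in M$, the set $\{b \in M \mid b \text{ is an atom}, b \not\le a'\}$ is contained in $\{a\}$ and is therefore finite, so $M$ is almost orthogonal in the sense of Definition \ref{almoml}. Theorems \ref{haus} and \ref{cahrh} then yield that $\tau_i$ is Hausdorff and that $\tau_i = \tau_o = \tau_\Phi$.

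For part (ii), assume $M$ is complete and atomic. Complete effect algebras are Archimedean by \cite{ZR54}, so part (i) applies; in addition, Frink's theorem (Lemma \ref{pomlem}(vi)) makes $\tau_i$ compact. Combined with Hausdorffness and the identification $\tau_o = \tau_\Phi$ from part (i), this shows that $\tau_o$ is a uniform compact Hausdorff topology, compatible with the uniformity $\mathcal{U}_\Phi$. For the coincidence of $(o)$-convergence and $\tau_o$-convergence, the implication $x_\alpha \too x \Rightarrow x_\alpha \totau x$ is built into the very definition of $\tau_o$ (condition (a) of the order topology). For the converse direction, Theorem \ref{compger} shows $M$ is compactly generated (and hence $(o)$-continuous), so \cite[Theorem 1]{PR4}, applied exactly as in the proof of Theorem \ref{cahrh}, delivers that $\tau_o$-convergence of nets coincides with order convergence on any complete compactly generated lattice effect algebra.

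For part (iii), both directions collapse to Frink's theorem. If $E$ is complete, then $E$ is Archimedean by \cite{ZR54}, so Theorem \ref{semicahrh} ensures $\tau_i = \tau_o$ is Hausdorff, and Lemma \ref{pomlem}(vi) makes $\tau_i$ compact; hence $\tau_i = \tau_o$ is compact Hausdorff. Conversely, if $\tau_i = \tau_o$ is compact Hausdorff, then the compactness of $\tau_i$ forces $E$ to be a complete lattice, again by Lemma \ref{pomlem}(vi). The main conceptual step in the whole theorem is therefore the almost orthogonality of any Archimedean atomic MV-effect algebra, which is an immediate consequence of universal compatibility in MV-algebras; no substantial technical obstacle is expected beyond careful bookkeeping of the prior results being invoked.
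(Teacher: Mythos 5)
Your overall route coincides with the paper's. For part (i) you observe that universal compatibility in an MV-effect algebra forces any two distinct atoms $a,b$ to satisfy $b\le a'$, so $\{b\in M\mid b\not\le a',\ b \hbox{ is an atom}\}\subseteq\{a\}$ and $M$ is almost orthogonal; Theorems \ref{haus} and \ref{cahrh} then give the conclusion, exactly as the paper does (it phrases the same fact as ``every pair of atoms is orthogonal''). Part (iii) is likewise handled the same way, via Theorem \ref{semicahrh} plus Frink's theorem (Lemma \ref{pomlem}(vi)); your explicit remark that completeness supplies the Archimedean hypothesis needed to invoke Theorem \ref{semicahrh} is a small piece of care that the paper leaves implicit.

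The one genuine gap is in part (ii), in the direction ``$\tau_o$-convergence implies $(o)$-convergence''. You claim that \cite[Theorem 1]{PR4}, ``applied exactly as in the proof of Theorem \ref{cahrh}'', yields that $\tau_o$-convergence of nets coincides with order convergence. But in the proof of Theorem \ref{cahrh} that result is used only to conclude the equality of topologies $\tau_o=\tau_\Phi$, which is strictly weaker: identifying $\tau_o$ with a compact Hausdorff uniform topology does not by itself make order convergence of nets topological --- in general a net can converge in $\tau_o$ without $(o)$-converging, and only the implication from $(o)$-convergence to $\tau_o$-convergence is definitional. The paper closes this step with a different external result, \cite[Theorem 2]{PR4}, tailored to complete lattice effect algebras compactly generated by finite elements with compact $\tau_i$ (compare also the use of \cite[Theorem 8]{ZR73} for the same purpose in Corollary \ref{charalmost}). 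Your ingredients --- completeness, compact generation via Theorem \ref{compger}, compactness of $\tau_i$ from Frink --- are the right ones, but the conclusion $\tau_o\equiv(o)$ requires that separate theorem; as written, the step is unsupported.
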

\begin{proof}\mbox{\rm(i), (ii):} This follows from the fact that every
pair of elements of $M$ is compatible, hence every pair of atoms is orthogonal.
Thus for (i) we can apply Theorem \ref{cahrh} and for (ii) we can use
(i) and \cite[Theorem 2]{PR4} since $M$ is compactly generated by finite elements
and $\tau_i$ is compact.

\noindent{}\mbox{\rm(iii)} From Theorem \ref{semicahrh}  we know that
 $\tau_i=\tau_o$ is a Hausdorff topology. By Lemma \ref{pomlem} (vi) the interval topology
$\tau_i$ on $E$ is compact iff $E$ is a complete lattice.
\end{proof}

In what follows we will need Corollary \ref{fincomp} of  Lemma \ref{compact}.

\begin{lemma}\label{compact}
Let $E$ be an  Archimedean atomic lattice effect algebra. Then
\begin{enumerate}\item[\mbox{\rm{}(i)}]  If
$c, d\in E$ are compact elements with $c\leq d'$ then $c\oplus d$ is compact.
\item[\mbox{\rm{}(ii)}] If $u=\bigoplus G$, where $G$ is a $\oplus$-orthogonal
system of  atoms of $E$, and $u$ is compact then $G$
is finite.
\end{enumerate}
\end{lemma}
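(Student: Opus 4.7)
My plan for (ii) is to exploit the definition $u = \bigoplus G = \bigvee \mathcal{F}$, where $\mathcal{F} = \{\bigoplus K : K \subseteq G,\ K \text{ finite}\}$ is upward directed (since $\bigoplus(K_1 \cup K_2)$ dominates both $\bigoplus K_1$ and $\bigoplus K_2$). Compactness of $u$ applied to the cover $u \leq \bigvee \mathcal{F}$ yields a finite $\mathcal{F}_0 \subseteq \mathcal{F}$ with $u \leq \bigvee \mathcal{F}_0$; directedness collapses this to $u \leq \bigoplus K_0$ for a single finite $K_0 \subseteq G$, and the reverse inequality $\bigoplus K_0 \leq u$ then forces $u = \bigoplus K_0$. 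If some atom $a$ were in $G \setminus K_0$, then $K_0 \cup \{a\}$ would still be a finite sub-family of $G$, so $\bigoplus(K_0 \cup \{a\}) = u \oplus a$ would both exist and lie below $u = u \oplus 0$; cancellation in effect algebras would then give $a = 0$, contradicting $a$'s being an atom. Hence $G = K_0$ is finite.

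For (i), my plan is to pass through (ii). Applying Theorem \ref{Theorem3.3} to $c$ and taking the $\oplus$-orthogonal system of $k_\alpha$ repetitions of each distinct atom $a_\alpha$ in the decomposition $c = \bigoplus_\alpha k_\alpha a_\alpha$, (ii) forces this system to be finite, so $c = p_1 \oplus \cdots \oplus p_m$ is a finite orthogonal sum of atoms; likewise $d = q_1 \oplus \cdots \oplus q_n$. Using $c \leq d'$, concatenation then realises $c \oplus d = p_1 \oplus \cdots \oplus p_m \oplus q_1 \oplus \cdots \oplus q_n$, again a finite orthogonal sum of atoms.

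The main obstacle is then to deduce from this presentation that $c \oplus d$ is itself compact, i.e., that every finite orthogonal sum of atoms of $E$ is compact. I would attempt induction on the number of summands, where the base case is the statement that each atom $p$ is compact: given $p$ an atom and a directed $D$ with $p \leq \bigvee D$, the family $\{p \wedge e : e \in D\}$ is directed and takes values in $\{0, p\}$, so the $(o)$-continuity type identity $p \wedge \bigvee D = \bigvee_{e \in D}(p \wedge e)$ would force some $e \in D$ with $p \leq e$. The inductive step then combines compactness of $p_1 \oplus \cdots \oplus p_{k-1}$ with compactness of $p_k$ via the same directedness-plus-cancellation ideas that drove the proof of (ii): if $p_1 \oplus \cdots \oplus p_k \leq \bigvee D$ with $D$ directed, compactness of $p_1 \oplus \cdots \oplus p_{k-1}$ and of $p_k$ each produce a dominating element of $D$, and one combines them carefully using $p_k \leq (p_1 \oplus \cdots \oplus p_{k-1})'$. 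I expect the base step---the actual compactness of atoms---to be the delicate one, as in wholly general Archimedean atomic lattice effect algebras it is a genuine structural input (indeed it is the pointwise content behind the compact-generation result Theorem \ref{compger}) rather than a consequence of atomicity alone.
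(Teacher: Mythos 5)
Your proof of part (ii) is correct and is essentially the paper's own argument: collapse the finite subcover to a single finite $K_0\subseteq G$ by directedness, conclude $u=\bigoplus K_0$, and then use cancellation on $\bigoplus(K_0\cup K)=\bigoplus K_0\oplus\bigoplus K\leq \bigoplus K_0$ to kill any leftover atoms.

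Part (i), however, has a genuine gap. Your reduction replaces the hypothesis that $c$ and $d$ are compact by the strictly weaker property that $c$ and $d$ are finite orthogonal sums of atoms, and you then need every finite orthogonal sum of atoms --- in particular every atom --- to be compact. That is false in a general Archimedean atomic lattice effect algebra: in the horizontal sum of finitely many infinite complete atomic Boolean algebras (the paper's own counterexample in Section~4), an atom $p$ of one summand $B_1$ satisfies $p\leq 1=\bigvee\{a\mid a\ \mbox{is an atom of}\ B_2\}$, yet no finite join of atoms of $B_2$ dominates $p$; so atoms need not be compact, and being below a compact element does not make them so. Your proposed base-case argument papers over this by invoking the meet-continuity identity $p\wedge\bigvee D=\bigvee_{e\in D}(p\wedge e)$, i.e.\ $(o)$-continuity, which is not a hypothesis of Lemma~\ref{compact} (it follows from almost orthogonality via Theorem~\ref{compger}, but the lemma is stated and used for arbitrary Archimedean atomic lattice effect algebras). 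You flag this step yourself as ``a genuine structural input,'' and it is precisely the missing, and in this generality unobtainable, ingredient. The paper avoids the detour through atoms entirely and uses the compactness of $c$ and $d$ directly: given $c\oplus d\leq\bigvee D$, pass to the directed net $x_F=\bigvee F$ over finite $F\subseteq D$, use compactness of $c$ and of $d$ to find a finite $F_1$ with $c\vee d\leq x_{F_1}$, invoke the lattice-effect-algebra identity $x_F\ominus c\uparrow x\ominus c$ for $F\supseteq F_1$, and then apply compactness of $d$ to this directed family to obtain $d\leq x_{F_2}\ominus c$ for some finite $F_2\supseteq F_1$, whence $c\oplus d\leq x_{F_2}$. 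If you want to salvage your outline, you must keep the compactness of $c$ and $d$ in play rather than trading it for an atomic decomposition.
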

\begin{proof}\mbox{\rm{}(i)}  Let $c\oplus d\leq \bigvee D$.
Let ${\cal E}=\{F\subseteq D: F \ \mbox{is finite}\}$ be directed
by set inclusion and let for every $F\in {\cal E}$ be $x_F=\bigvee F$. Then
$x_F \uparrow x=\bigvee D$. Since $c\leq \bigvee D$
and $d\leq \bigvee D$ there is a finite subset $F_1\subseteq D$ such that
$c\vee d\leq \bigvee F_1$. Therefore, for $F\geq F_1$,
$x_F\ominus c \uparrow x\ominus c, d\leq x\ominus c$. Then there is
a finite subset $F_2\subseteq D$, $F_1\leq F_2$ such that
$d\leq x_{F_2}\ominus c$. Hence $c\oplus d \leq x_{F_2}$.

\noindent{}\mbox{\rm{}(ii)} Let $u\in E$,   $u=\bigoplus G=\bigvee\{\bigoplus K\mid K\subseteq G$ is finite$\}$
where $G=(a_{\kappa})_{\kappa\in H}$ is a $\oplus$-orthogonal system of  atoms. Clearly if
$K_1, K_2\subseteq G$ are finite such that $K_1\subseteq K_2$ then
$\bigoplus K_1\leq \bigoplus K_2$.

Assume that $u$ is compact. Hence
there are finite $K_1, K_2,$ $\dots, K_n\subseteq G$ such that
$u\leq \bigvee\{\bigoplus K_i \mid i=1, 2, \dots, n\}$.
Let $K_0=\bigcup\{K_i \mid i=1, 2, \dots, n\}$.
Then $K_0\subseteq G$, $K_0$ is finite and $\bigoplus K_i\leq \bigoplus K_0$,
$i=1, 2 \dots, n$, which gives that
$\bigvee\{\bigoplus K_i \mid i=1, 2, \dots, n\}\leq \bigoplus K_0$.
It follows that $u\leq \bigoplus K_0 \leq u=\bigvee\{\bigoplus K\mid K\subseteq G$
is finite$\}$. Hence  $u= \bigoplus K_0$, $K_0\subseteq G$ is finite.
Further, for every finite $K\subseteq G\setminus K_0$ we have
$\bigoplus K_0\subseteq \bigoplus (K_0\cup K)=%
\bigoplus K_0\oplus\bigoplus K\leq  u=\bigoplus K_0$ , which gives  that
$\bigoplus K=0$. Hence $K=\emptyset$ and thus $G\setminus K_0=\emptyset$
which gives that $K_0= G$.\end{proof}

\begin{corollary}\label{fincomp} Let $E$ be an o-continuous  Archimedean atomic lattice effect algebra.
Then every finite element of $E$ is compact.
\end{corollary}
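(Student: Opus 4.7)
The plan is to reduce compactness of a finite element to compactness of atoms via Lemma \ref{compact}(i), and then to settle the atomic case using o-continuity.

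\textbf{Step 1 (atoms are compact).} Let $a\in E$ be an atom and suppose $a\le\bigvee D$ for some $D\subseteq E$ whose join exists in $E$. I would consider the directed family $\mathcal{F}=\{F\subseteq D\mid F\text{ finite}\}$ ordered by inclusion, and the monotone net $x_F=\bigvee F$. Then $x_F\uparrow\bigvee D$. Applying o-continuity of $E$ to this upward directed net gives
\[
a \;=\; a\wedge\bigvee D \;=\; \bigvee_{F\in\mathcal{F}}(a\wedge x_F).
\]
Since $a$ is an atom, each meet $a\wedge x_F$ lies in the two-element set $\{0,a\}$. The supremum of any subset of $\{0,a\}$ equals $a$ only if $a$ itself is attained, so there exists $F_0\in\mathcal{F}$ with $a\wedge x_{F_0}=a$, i.e.\ $a\le\bigvee F_0$. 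This shows $a$ is compact.

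\textbf{Step 2 (from atoms to all finite elements).} The case $u=0$ is immediate, since the definition of compactness is satisfied vacuously via $F=\emptyset$. Otherwise, write $u=p_1\oplus p_2\oplus\cdots\oplus p_n$ for atoms $p_1,\ldots,p_n\in E$, and put $u_k=p_1\oplus\cdots\oplus p_k$. Since $u_n=u$ exists, each partial sum $u_k$ exists and $u_{k-1}\le p_k'$. By Step 1 every atom $p_k$ is compact; assuming inductively that $u_{k-1}$ is compact, Lemma \ref{compact}(i) applied with $c=u_{k-1}$ and $d=p_k$ yields that $u_k=u_{k-1}\oplus p_k$ is compact. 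Hence $u=u_n$ is compact, as required.

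There is no serious obstacle here: the single nontrivial ingredient is the use of o-continuity in Step 1 to distribute $a\wedge(-)$ over the directed supremum $\bigvee_F x_F$, which is exactly what (o)-continuity of a lattice effect algebra provides. The induction in Step 2 is then a routine application of the previously established Lemma \ref{compact}(i).
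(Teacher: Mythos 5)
Your proof is correct, and it differs from the paper's in how the key step (compactness of atoms) is established. The paper simply invokes \cite[Theorem 7]{PR4}, which says that an (o)-continuous Archimedean atomic lattice effect algebra is compactly generated; since an atom is a join of compact elements and the only nonzero element below an atom is the atom itself, atoms are compact, and the paper then finishes with the same ``easy induction'' via Lemma \ref{compact}(i) that you carry out explicitly. You instead prove compactness of atoms directly: writing $\bigvee D$ as the directed supremum of the finite partial joins $x_F$ and using meet-continuity to get $a=\bigvee_F (a\wedge x_F)$, then exploiting that each $a\wedge x_F\in\{0,a\}$, so the supremum can equal $a$ only if some $a\wedge x_{F_0}=a$. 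This is a clean, self-contained, and more elementary argument that avoids importing the external compact-generation theorem; what the paper's route buys is brevity and consistency with the machinery of \cite{PR4} that is used throughout. Your Step 2 is exactly the induction the paper intends: the existence of $u=p_1\oplus\cdots\oplus p_n$ guarantees that each partial sum $u_{k-1}$ exists with $u_{k-1}\le p_k'$, so Lemma \ref{compact}(i) applies at each stage. Both arguments are sound.
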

\begin{proof} Clearly, by \cite[Theorem 7]{PR4} we know that $E$ is
compactly generated. Therefore, any atom of $E$ is compact.
The compactness of every finite element follows by an easy induction.
\end{proof}

\begin{theorem}\label{efcharalmost}
Let $E$ be an Archimedean  atomic lattice effect algebra. Then the following conditions
are equivalent:
\begin{enumerate}
\item[\mbox{\rm(i)}] $\tau_i=\tau_o=\tau_{\Phi}$.
\item[\mbox{\rm(ii)}] $E$ is o-continuous and $\tau_i$ is Hausdorff.
\item[\mbox{\rm(iii)}] $E$ is almost orthogonal.
\end{enumerate}
\end{theorem}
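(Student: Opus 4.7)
The plan is to prove the cycle (iii)$\,\Rightarrow\,$(ii)$\,\Rightarrow\,$(i)$\,\Rightarrow\,$(iii). The first two implications are short assemblies of results already established in the paper; the third carries the new content.

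For (iii)$\,\Rightarrow\,$(ii), Theorems \ref{haus} and \ref{compger} give directly that $\tau_i$ is Hausdorff and that $E$ is (o)-continuous. For (ii)$\,\Rightarrow\,$(i), the Hausdorffness of $\tau_i$ combined with Lemma \ref{pomlem}(v) yields $\tau_i=\tau_o$; then (o)-continuity together with atomicity gives, via \cite[Theorem 7]{PR4} (the same fact underlying Corollary \ref{fincomp}), that $E$ is compactly generated, whereupon \cite[Theorem 1]{PR4} forces $\tau_o=\tau_\Phi$. This is exactly the combination of results that established Theorem \ref{cahrh}, now applied starting from (ii) rather than from almost orthogonality.

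For (i)$\,\Rightarrow\,$(iii), fix an atom $a\in E$. Since $a=1\cdot a\in\mathcal{U}$, we have $a'\in\mathcal{V}$, so $[0,a']=g_{a'}^{-1}(\{1\})$ is clopen in $\tau_\Phi$ (the function $g_{a'}$ takes values in the discrete two-point set). By the hypothesis $\tau_i=\tau_\Phi$, the set $[0,a']$ is $\tau_i$-open as well. Because $0\in[0,a']$, the definition of $\tau_i$ provides a basic open neighborhood $U=E\setminus\bigcup_{j=1}^m[u_j,v_j]\subseteq[0,a']$ with $0\in U$. The condition $0\in U$ means $u_j\neq 0$ for each $j$, while $U\subseteq[0,a']$ translates to $\{x\in E:x\not\leq a'\}\subseteq\bigcup_{j=1}^m[u_j,v_j]$. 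For any atom $b\not\leq a'$ we then have $b\in[u_{j_0},v_{j_0}]$ for some $j_0$, so $0<u_{j_0}\leq b$; atomicity of $b$ forces $u_{j_0}=b$. Hence the set of atoms not below $a'$ is contained in the finite set $\{u_1,\dots,u_m\}$, so $E$ is almost orthogonal in the sense of Definition \ref{almoml}.

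The main obstacle is really just the step (i)$\,\Rightarrow\,$(iii): one must notice that the clopenness of $[0,a']$ in $\tau_\Phi$ can be transferred, via the coincidence $\tau_\Phi=\tau_i$, into a finite $\tau_i$-basic covering of $[0,a']^c$ whose left endpoints are forced to coincide with the atoms not orthogonal to $a$. Once this trick is in place, the atom-counting argument is immediate, and the other two implications are bookkeeping of prior results.
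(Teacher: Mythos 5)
Your proof is correct and follows essentially the same route as the paper's: the implications (iii)$\Rightarrow$(ii) and (ii)$\Rightarrow$(i) invoke exactly the same ingredients (Theorems \ref{haus} and \ref{compger}, Lemma \ref{pomlem}(v), Corollary \ref{fincomp} and the cited results of \cite{PR4}), while your (i)$\Rightarrow$(iii) argument---clopenness of $[0,a']$ in $\tau_{\Phi}=\tau_i$ yielding a finite union of intervals $[u_j,v_j]$ with $u_j\neq 0$ covering $\{x\in E\mid x\not\leq a'\}$, whose lower endpoints must equal the atoms non-orthogonal to $a$---is exactly the extraction the paper performs (for $[0,(la)']$) inside its (ii)$\Rightarrow$(iii) step. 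The only difference is the organization of the cycle of implications; the mathematical content coincides.
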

\begin{proof} \mbox{\rm(i)}$\implies$\mbox{\rm(ii)}: Since
$\tau_o=\tau_{\Phi}$ we have by \cite[Theorem 1]{PR4} that $E$ is compactly
generated and hence o-continuous. The condition $\tau_i=\tau_{\Phi}$ implies
that $\tau_i$ is Hausdorff because $\tau_{\Phi}$ is Hausdorff.

\noindent{}\mbox{\rm(ii)}$\implies$\mbox{\rm(i)}, \mbox{\rm(iii)}:
Since $\tau_i$ is Hausdorff we obtain  $\tau_i=\tau_o$ by \cite{erne1}.
Moreover, from \cite[Theorem 7]{PR4} and Corollary \ref{fincomp}
the (o)-continuity of $E$ implies that
$E$ is compactly generated by the elements from  ${\mathcal U}$. This gives
 $\tau_o=\tau_{\Phi}$ from \cite[Theorem 1]{PR4}.

 Let $a\in E$ be an atom, $1\leq  l \leq n_a$. Then the interval
 $[0, (la)']$ is a clopen set in the order topology $\tau_o=\tau_{\Phi}=\tau_i$.
 Hence there is a finite set of intervals in $E$ such that
 $0\in E\setminus \bigcup_{i=1}^{n} [u_i, v_i]\subseteq [0, (la)']$.
 Thus $E\subseteq [0, (la)']\cup \bigcup_{i=1}^{n} [u_i, v_i]\subseteq
[0, (la)']\cup \bigcup_{i=1}^{n} [k_i{}b_i, 1]$, where $b_i\in E$ are
atoms such that $k_i{}b_i\leq u_i$, $1\leq k_i\leq n_{b_i}$, $i=1, \dots, n$. This yields that
$E$ is almost orthogonal.

\noindent{}\mbox{\rm(iii)}$\implies$\mbox{\rm(ii)}: From Theorems \ref{haus}  and \ref{compger}
we have that $\tau_i$ is Hausdorff and $E$ is compactly generated, hence (o)-continuous.
 \end{proof}

\begin{corollary}\label{charalmost}
Let $E$ be a complete  atomic lattice effect algebra. Then the following conditions
are equivalent:
\begin{enumerate}
\item[\mbox{\rm(i)}] $E$ is almost orthogonal.
\item[\mbox{\rm(ii)}] $\tau_i=\tau_o=\tau_{\Phi}\equiv  (o)$.
\item[\mbox{\rm(iii)}] $E$ is (o)-continuous and $\tau_i$ is Hausdorff.
\end{enumerate}
\end{corollary}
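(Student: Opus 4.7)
The plan is to reduce Corollary \ref{charalmost} to Theorem \ref{efcharalmost}, using the completeness hypothesis only to obtain the one additional clause ``$\equiv (o)$'' appearing in (ii). Recall that Theorem \ref{efcharalmost}, proved without any completeness assumption, already establishes the equivalence of almost orthogonality, of (o)-continuity together with Hausdorffness of $\tau_i$, and of the identity $\tau_i=\tau_o=\tau_\Phi$. Since condition (i) of the corollary is identical to Theorem \ref{efcharalmost}(iii) and condition (iii) of the corollary is identical to Theorem \ref{efcharalmost}(ii), the equivalence (i)$\Leftrightarrow$(iii) is immediate. Moreover, (ii)$\Rightarrow$(iii) follows because (ii) contains the identity $\tau_i=\tau_o=\tau_\Phi$, which by Theorem \ref{efcharalmost} already forces (o)-continuity and Hausdorffness of $\tau_i$.

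Thus the only step requiring a fresh argument is (i)$\Rightarrow$(ii), and there the only new information beyond what Theorem \ref{efcharalmost} gives is the claim $\tau_o\equiv (o)$. I would derive this as follows. Starting from almost orthogonality of $E$, Theorem \ref{efcharalmost} supplies $\tau_i=\tau_o=\tau_\Phi$ and (o)-continuity, while Theorem \ref{compger} supplies compact generation (alternatively, once (o)-continuity is known, Corollary \ref{fincomp} shows every finite element of $E$ is compact, so the representation in Theorem \ref{Theorem3.3} directly exhibits $E$ as compactly generated). At this point completeness enters: exactly as in the proof of Theorem \ref{blockfin}(ii), I invoke \cite[Theorem 2]{PR4} which asserts that in a complete compactly generated lattice effect algebra the order convergence of nets is topological and agrees with convergence in $\tau_o$. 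This upgrades $\tau_o=\tau_\Phi$ to $\tau_o=\tau_\Phi\equiv (o)$, finishing (ii).

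I do not anticipate a genuine obstacle: the corollary is essentially a bookkeeping wrap-up that isolates, among the conditions of Theorem \ref{efcharalmost}, those that become sharpest under completeness. The only delicate point is making sure the hypotheses of \cite[Theorem 2]{PR4} are genuinely in force at the moment it is applied, that is, that we have simultaneously completeness (given), compact generation (from Theorem \ref{compger}) and (o)-continuity (from Theorem \ref{efcharalmost}); once these three are assembled, the coincidence of $\tau_o$ with $(o)$-convergence is automatic and the chain of implications (i)$\Rightarrow$(ii)$\Rightarrow$(iii)$\Rightarrow$(i) closes.
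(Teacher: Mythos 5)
Your proposal is correct and follows essentially the same route as the paper: Corollary \ref{charalmost} is reduced to Theorem \ref{efcharalmost} (applicable since completeness implies the Archimedean property), with completeness used only to upgrade $\tau_o=\tau_{\Phi}$ to $\tau_o\equiv(o)$. The sole difference is the source of that last clause --- the paper cites \cite[Theorem 8]{ZR73} (order convergence is topological on (o)-continuous complete lattice effect algebras), whereas you route through compact generation and \cite[Theorem 2]{PR4}, exactly as the paper itself does in Theorem \ref{blockfin}(ii); both are equally valid here.
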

\begin{proof}  It follows from
Theorems \ref{efcharalmost} and the fact that by (o)-continuity of $E$ \cite[Theorem 8]{ZR73} we have
$\tau_o\equiv  (o)$.
 \end{proof}

 The next example shows that a complete block-finite
 atomic lattice effect algebra need not be (o)-continuous and
 almost orthogonal in spite of that $\tau_i=\tau_o$ is a compact Hausdorff
 topology.

 \begin{example}{\rm Let $E$ be a horizontal sum of finitely many infinite complete
 atomic Boolean algebras $(B_i,\bigoplus_i, 0_i, 1_i)$,
 $i=1,2,...,n$. Then  $E$ is an atomic complete lattice effect algebra,
 $E$ is not almost orthogonal, $E$ is not
{compactly generated by finite elements} (hence $\tau_o\not =\tau_{\Phi}$),
 $E$ is  block-finite, $\tau_i=\tau_o$ is Hausdorff by Theorem \ref{semicahrh},
 and the interval topology $\tau_i$ on $E$  is compact. 
 }
 \end{example}

\section{Applications}

\begin{theorem}\label{scomport}
Let $E$ be a block-finite complete  atomic lattice effect algebra. Then the following conditions
are equivalent:
\begin{enumerate}
\item[\mbox{\rm(i)}] $E$ is almost orthogonal.
\item[\mbox{\rm(ii)}] $E$ is compactly generated.
\item[\mbox{\rm(iii)}] $E$ is (o)-continuous.
\item[\mbox{\rm(iv)}] $\tau_i=\tau_o=\tau_{\Phi}\equiv  (o)$.
\end{enumerate}
\end{theorem}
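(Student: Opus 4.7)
The plan is to close a cycle of implications, with most links already supplied by earlier results; the block-finite hypothesis enters in exactly one place.

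First, note that every complete effect algebra is Archimedean, so all of the machinery from the previous section applies. The equivalence (i)$\iff$(iv) is then just Corollary \ref{charalmost} itself: for a complete atomic lattice effect algebra, almost orthogonality and $\tau_i=\tau_o=\tau_{\Phi}\equiv (o)$ are already known to be the same condition. Similarly, (i)$\Longrightarrow$(ii) is immediate from Theorem \ref{compger}, which shows that any almost orthogonal Archimedean atomic lattice effect algebra is compactly generated (and in fact (o)-continuous). The implication (ii)$\Longrightarrow$(iii) is the standard fact that compactly generated lattice effect algebras are (o)-continuous, invoked (as in the proof of Theorem \ref{compger}) from \cite[Theorem 7]{PR4}.

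The only genuine work is the implication (iii)$\Longrightarrow$(i), and this is where block-finiteness is used. My plan is as follows. Assume $E$ is (o)-continuous (and, by hypothesis, block-finite, complete, atomic). Since $E$ is a block-finite atomic Archimedean lattice effect algebra, Theorem \ref{semicahrh} gives that $\tau_i=\tau_o$ and this common topology is Hausdorff; no further work is needed to secure the Hausdorff property. Now I have an (o)-continuous complete atomic lattice effect algebra whose interval topology is Hausdorff, so the implication (iii)$\Longrightarrow$(i) of Corollary \ref{charalmost} applies directly and delivers almost orthogonality. This closes the cycle (i)$\Rightarrow$(ii)$\Rightarrow$(iii)$\Rightarrow$(i), together with (i)$\iff$(iv).

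The hardest conceptual step is really the one already absorbed into Theorem \ref{semicahrh}: namely, that block-finiteness alone forces $\tau_i$ to be Hausdorff by a finite intervals separation argument. Once that is in hand, the present theorem is essentially a packaging result, showing that on block-finite complete atomic lattice effect algebras the Hausdorff property of $\tau_i$ is automatic and therefore the three a priori weaker conditions (compact generation, (o)-continuity, topological agreement $\tau_i=\tau_o=\tau_{\Phi}\equiv (o)$) all collapse onto almost orthogonality. I do not anticipate needing any new technical lemma beyond Corollary \ref{charalmost}, Theorem \ref{compger}, Theorem \ref{semicahrh}, and the compact-generation/(o)-continuity equivalence from \cite{PR4}.
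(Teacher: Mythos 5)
Your proposal is correct and follows essentially the same route as the paper: both rest on Theorem \ref{semicahrh} to get Hausdorffness of $\tau_i=\tau_o$ for free from block-finiteness, then invoke Corollary \ref{charalmost} and the compact-generation/(o)-continuity equivalence from \cite{PR4}. Your arrangement as a cycle (i)$\Rightarrow$(ii)$\Rightarrow$(iii)$\Rightarrow$(i) plus (i)$\Leftrightarrow$(iv) is only a cosmetic reorganization of the paper's chain of equivalences.
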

\begin{proof} By Theorem \ref{semicahrh}, $\tau_i=\tau_o$ is a Hausdorff topology.
 This by \cite[Theorem 7]{PR4} gives that
\mbox{\rm(ii)} $\Longleftrightarrow $ \mbox{\rm(iii)} and by Corollary
 \ref{charalmost} we obtain that \mbox{\rm(i)} $\Longleftrightarrow $ \mbox{\rm(iii)}
 $\Longleftrightarrow $ \mbox{\rm(iv)}.
 \end{proof}

In Theorem \ref{scomport}, the assumption that $E$
is atomic can not be omitted. For instance,
every non-atomic complete Boolean algebra is
(o)-continuous but it is not compactly
generated, because in such a case
$E$ must be atomic by \cite[Theorem 6]{PR4}.

\begin{remark}{\rm\label{cannot}
If a $\oplus$-operation on a lattice effect algebra $E$ is
continuous with respect to its interval topology $\tau_i$
meaning that $x_{\alpha}\leq y_{\alpha}^{'}$,
$x_\alpha\stackrel{\tau_{i}}{\rightarrow} x$,
$y_\alpha\stackrel{\tau_{i}}{\rightarrow} y$,
$\alpha\in\mathcal{E}$ implies
$x_\alpha\oplus y_\alpha\stackrel{\tau_{i}}{\rightarrow} x\oplus y$,
then $\tau_i$ is Hausdorff (see \cite{wulei}).
Hence $\oplus$-operation on complete (o)-continuous  atomic
lattice effect algebras which are not almost orthogonal
cannot be $\tau_i$-continuous, by \cite{wulei} and Corollary \ref{charalmost}.   }
\end{remark}

\begin{theorem}\label{ocontcomp}
Let $E$ be a block-finite  complete atomic lattice effect algebra.
Let  $(x_\alpha)_{\alpha\in\mathcal{E}}$  and
$(y_\alpha)_{\alpha\in\mathcal{E}}$ be nets of elements of $E$
such that $x_{\alpha}\leq y_{\alpha}^{'}$ for all ${\alpha\in\mathcal{E}}$.

If $x_\alpha\stackrel{\tau_{i}}{\rightarrow} x$,
$y_\alpha\stackrel{\tau_{i}}{\rightarrow} y$,
$\alpha\in\mathcal{E}$ then $x\leq y^{'}$ and
$x_\alpha\oplus y_\alpha\stackrel{\tau_{i}}{\rightarrow} x\oplus y$,
$\alpha\in\mathcal{E}$.
Moreover, $\tau_i=\tau_o$.
\end{theorem}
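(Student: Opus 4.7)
The claim ``$\tau_i=\tau_o$'' is immediate from Theorem \ref{semicahrh}, since every complete effect algebra is Archimedean; moreover, by Lemma \ref{pomlem}(vi) this common topology is compact Hausdorff. So the plan reduces to proving $x\le y'$ and the $\tau_i$-continuity of $\oplus$.

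The strategy is to pass to a single block of $E$, in which $\oplus$ behaves MV-algebraically, and use the standard $(o)$-continuity of MV-operations. I will combine this with Lemma \ref{charcofin}: it suffices to show that for every cofinal $\mathcal{E}'\subseteq\mathcal{E}$ there exists a further cofinal $\mathcal{E}''\subseteq\mathcal{E}'$ along which $x_\gamma\oplus y_\gamma\stackrel{\tau_i}{\rightarrow}x\oplus y$. Fix such $\mathcal{E}'$. Because $x_\alpha\le y_\alpha'$, the elements $x_\alpha$ and $y_\alpha$ are orthogonal, hence compatible, hence contained in a common block. Since $E$ has only finitely many blocks $M_1,\dots,M_n$, a pigeonhole argument on the directed set $\mathcal{E}'$ (if no $M_j$ were cofinal, pick, for each $j$, an index $\alpha_j$ beyond which $(x_\alpha,y_\alpha)\notin M_j$, then any upper bound of $\alpha_1,\dots,\alpha_n$ yields a contradiction) produces a cofinal $\mathcal{E}''\subseteq\mathcal{E}'$ and a single block $M$ with $x_\gamma,y_\gamma\in M$ for all $\gamma\in\mathcal{E}''$.

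Blocks of block-finite atomic lattice effect algebras are atomic (as used in the proof of Theorem \ref{semicahrh}), and $M$ is a complete sub-lattice effect algebra of $E$. Since $\tau_i^E$ is Hausdorff, Theorem \ref{convblokse}(2) forces the limits to lie in $M$ and gives $x_\gamma\stackrel{\tau_i^M}{\rightarrow}x$, $y_\gamma\stackrel{\tau_i^M}{\rightarrow}y$. Being a complete atomic MV-effect algebra, $M$ satisfies $\tau_i^M=\tau_o^M\equiv(o)$ by Theorem \ref{blockfin}(ii). In any complete MV-algebra, $\oplus$ and ${}'$ are $(o)$-continuous (joins distribute over $\oplus$, and ${}'$ is an order-reversing involution), so $x_\gamma\oplus y_\gamma\too x\oplus y$ in $M$. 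Transporting back via $\tau_i^M=\tau_i^E\cap M$ (Theorem \ref{compconvti}(a)) yields $x_\gamma\oplus y_\gamma\stackrel{\tau_i^E}{\rightarrow}x\oplus y$ on $\mathcal{E}''$, and since $\mathcal{E}'$ was arbitrary, Lemma \ref{charcofin} delivers the convergence of the full net.

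The relation $x\le y'$ comes for free: the $(o)$-continuity of ${}'$ in $M$ gives $y_\gamma'\to y'$, and the graph of $\le$ is closed in $M\times M$ with the product of $\tau_o^M$ (each $[0,a]$ is $\tau_o$-closed by Lemma \ref{pomlem}(ii)), so passing to the limit in $x_\gamma\le y_\gamma'$ yields $x\le y'$. The one delicate step I expect is the cofinal-block extraction combined with Theorem \ref{convblokse}(2): one must be careful that Hausdorffness of $\tau_i^E$ is really what keeps the limits inside the block, and that the pigeonhole on the directed index set produces a genuinely cofinal subset. Once this scaffolding is in place, the MV-algebraic $(o)$-continuity of $\oplus$ and ${}'$ handles the rest.
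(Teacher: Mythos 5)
Your proposal is correct and follows essentially the same route as the paper's proof: pass to a single block containing $\{x_\alpha,y_\alpha,x_\alpha\oplus y_\alpha\}$ for a cofinal subfamily of any cofinal $\mathcal{E}'$ (pigeonhole on the finitely many blocks), use Theorem~\ref{compconvti}/Theorem~\ref{convblokse} to identify $\tau_i^{M}$ with $\tau_i\cap M$, settle the MV-effect-algebra case inside the block, and recover the full net by Lemma~\ref{charcofin}. The only substantive difference is at the final step: the paper invokes \cite[Theorem 3.6]{wulei} for $\tau_i$-continuity of $\oplus$ on the complete MV-effect algebra $M$ with Hausdorff interval topology, whereas you argue directly via Theorem~\ref{blockfin}(ii) ($\tau_i^M=\tau_o^M\equiv(o)$) and the $(o)$-continuity of the MV-operations --- and you also spell out the derivation of $x\leq y'$, which the paper's proof leaves implicit; both points are sound.
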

\begin{proof} Since, by Theorem \ref{semicahrh},
$\tau_i$ is Hausdorff, we obtain that $\tau_i=\tau_o$ by \cite{erne1}.
Let $\{ M_1, \dots, M_n\}$ be the set of all blocks of $E$.
Further, for every  ${\alpha\in\mathcal{E}}$, elements of the set
$\{ x_\alpha, y_\alpha, x_\alpha\oplus y_\alpha\}$ are pairwise
compatible. It follows that for every  ${\alpha\in\mathcal{E}}$
there exists a block $M_{k_{\alpha}}$ of $E$, ${k_{\alpha}}\in \{ 1, \dots, n\}$
such that $\{ x_\alpha, y_\alpha, x_\alpha\oplus y_\alpha\}%
\subseteq M_{k_{\alpha}}$. Let ${\mathcal{E}}'$  be
any cofinal subset of ${\mathcal{E}}$. Since ${\mathcal{E}}'$
is directed upwards, there exists a block $M_{k_{0}}$ of $E$
and a cofinal subset ${\mathcal{E}}''$ of  ${\mathcal{E}}'$  such that
$\{ x_\beta, y_\alpha, x_\beta\oplus y_\beta\}%
\subseteq M_{k_{0}}$ for all $\beta\in {\mathcal{E}}''$.
Otherwise we obtain a contradiction with the finiteness of the
set $\{ M_1, \dots, M_n\}$. Further, by Theorem \ref{compconvti},
we obtain that $\tau^{M_{k_{0}}}_{i}=\tau_i\cap M_{k_{0}}$,
as $M_{k_{0}}$ is a complete sublattice of $E$ (see Theorem \ref{convblokse}).
It follows that the interval topology $\tau^{M_{k_{0}}}_{i}$ on the
complete MV-effect algebra $M_{k_{0}}$ is Hausdorff.
The last by \cite[Theorem 3.6]{wulei} gives that
$x_\beta\oplus y_\beta\stackrel{\tau^{M_{k_{0}}}_{i}}{\rightarrow} x\oplus y$,
$\beta\in\mathcal{E}''$ and hence
$x_\beta\oplus y_\beta\stackrel{\tau_{i}}{\rightarrow} x\oplus y$,
$\beta\in\mathcal{E}''$, as $\tau^{M_{k_{0}}}_{i}=\tau_i\cap M_{k_{0}}$.
It follows that $x_\alpha\oplus y_\alpha\stackrel{\tau_{i}}{\rightarrow} x\oplus y$,
$\alpha\in\mathcal{E}$ by Lemma \ref{charcofin}.
\end{proof}

In \cite[Theorem 4.5]{ZR54} it was proved that
a block-finite lattice effect algebra $(E;\oplus,0,1)$ has a
MacNeille completion which is a complete effect algebra
$(MC(E);\oplus,0,1)$ containing $E$ as a
(join-dense and meet-dense) sub-lattice effect algebra
 iff $E$ is Archimedean. In what follows we put $\widehat E=MC(E)$.

\begin{corollary}\label{ocontarch}
 Let $E$ be a block-finite  Archimedean atomic lattice effect algebra.
 Then for any nets  $(x_\alpha)_{\alpha\in\mathcal{E}}$  and
$(y_\alpha)_{\alpha\in\mathcal{E}}$ of elements of $E$
with $x_{\alpha}\leq y_{\alpha}^{'}$, ${\alpha\in\mathcal{E}}$:
$x_\alpha\stackrel{\tau_{i}}{\rightarrow} x$,
$y_\alpha\stackrel{\tau_{i}}{\rightarrow} y$,
$\alpha\in\mathcal{E}$ implies
$x_\alpha\oplus y_\alpha\stackrel{\tau_{i}}{\rightarrow} x\oplus y$,
$\alpha\in\mathcal{E}$.
\end{corollary}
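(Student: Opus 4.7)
The plan is to reduce to the complete case treated in Theorem~\ref{ocontcomp} via the MacNeille completion $\widehat E = MC(E)$. By \cite[Theorem 4.5]{ZR54}, since $E$ is block-finite Archimedean, $\widehat E$ exists as a complete lattice effect algebra containing $E$ as a join-dense and meet-dense sub-lattice effect algebra.

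The first step I would take is to verify that $\widehat E$ satisfies the structural hypotheses of Theorem~\ref{ocontcomp}. For atomicity: every atom $a$ of $E$ remains an atom of $\widehat E$ (any $0<b\le a$ is, by join-density, a join of nonzero elements of $E$ below $a$, forcing $b=a$), and every nonzero $z\in\widehat E$ dominates some nonzero element of $E$ and hence an atom of $E$, which is also an atom of $\widehat E$. For block-finiteness: each block of $\widehat E$ arises as the MacNeille completion of a block of $E$, so $\widehat E$ has at most as many blocks as $E$. Thus $\widehat E$ is a complete atomic block-finite lattice effect algebra.

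The second step is to identify the interval topology $\tau_i^E$ with the subspace topology $\tau_i^{\widehat E}\cap E$. The inclusion $\tau_i^E\subseteq \tau_i^{\widehat E}\cap E$ is immediate, since each closed subbasis interval $[a,b]_E$ with $a,b\in E$ coincides with $[a,b]_{\widehat E}\cap E$. For the converse, density gives $a=\bigvee\{e\in E:e\le a\}$ and $b=\bigwedge\{f\in E:f\ge b\}$ for any $a,b\in\widehat E$, whence
\[
[a,b]_{\widehat E}\cap E \;=\; \Bigl(\bigcap_{e\in E,\,e\le a}[e,1]_E\Bigr)\cap\Bigl(\bigcap_{f\in E,\,f\ge b}[0,f]_E\Bigr),
\]
which is $\tau_i^E$-closed as an intersection of $\tau_i^E$-closed sets. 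Consequently, $\tau_i^E$-convergence of nets in $E$ with limit in $E$ is equivalent to $\tau_i^{\widehat E}$-convergence.

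The conclusion then follows directly: from $x_\alpha\stackrel{\tau_i}{\rightarrow}x$ and $y_\alpha\stackrel{\tau_i}{\rightarrow}y$ in $E$ one obtains the analogous convergences in $\tau_i^{\widehat E}$, and Theorem~\ref{ocontcomp} applied inside $\widehat E$ yields $x\le y'$ (which holds in $E$ too since $E$ is a sub-effect algebra of $\widehat E$) together with $x_\alpha\oplus y_\alpha\stackrel{\tau_i^{\widehat E}}{\rightarrow}x\oplus y$. Because this net and its limit all lie in $E$, the convergence pulls back to $\tau_i^E$ via step two. The principal obstacle is establishing the two structural claims about $\widehat E$ in step one together with the topology identification in step two; once these are in place, the statement is essentially an immediate corollary of the complete-case theorem.
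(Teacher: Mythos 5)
Your proof follows the same route as the paper's: pass to the MacNeille completion $\widehat E$, identify the interval topology of $E$ with the trace of that of $\widehat E$ (a fact the paper simply cites as \cite[Lemma 1.1]{ZR47} rather than reproving), and apply Theorem~\ref{ocontcomp} inside $\widehat E$. The additional details you supply --- the atomicity and block-finiteness of $\widehat E$ and the explicit verification that $\widehat{\tau_i}\cap E=\tau_i$ --- are correct and merely make explicit what the paper delegates to citations.
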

\begin{proof} By \cite[Lemma 1.1]{ZR47}, for interval topologies
$\widehat{\tau}_{i}$ on $\widehat E$ and
${\tau_{i}}$ on $E$, we have $\widehat{\tau_{i}}\cap E={\tau_{i}}$.
Thus for $x_\alpha, y_\alpha, x, y\in E$ we obtain
$x_\alpha\oplus y_\alpha\stackrel{\widehat{\tau}_{i}}{\rightarrow} x\oplus y$,
$\alpha\in\mathcal{E}$ which gives
$x_\alpha\oplus y_\alpha\stackrel{{\tau}_{i}}{\rightarrow} x\oplus y$,
$\alpha\in\mathcal{E}$ by the fact that $\widehat{\tau_{i}}\cap E={\tau_{i}}$.
\end{proof}

\begin{definition}{Let $E$ be a  lattice.
Then
\begin{enumerate}
\item[\mbox{\rm{}(i)}] An element $u$ of $E$ is called {\em strongly compact}
(briefly {\em s-compact}) iff, for any $D\subseteq E$: $u\leq c\in E$
for all $c\geq D$  implies
$u\leq \bigvee F$ for some finite $F\subseteq D$.

\item[\mbox{\rm{}(ii)}]  $E$ is called {\em s-compactly generated} iff every
element of $E$ is a join of s-compact elements.
\end{enumerate}}%
\end{definition}

\begin{theorem}\label{equivscomport}
Let $E$ be a block-finite Archimedean atomic lattice effect algebra.
Then the following conditions
are equivalent:
\begin{enumerate}
\item[\mbox{\rm(i)}] $E$ is almost orthogonal.
\item[\mbox{\rm(ii)}] $\widehat{E}=MC(E)$ is almost orthogonal.
\item[\mbox{\rm(iii)}] $\widehat{E}=MC(E)$ is compactly generated.
\item[\mbox{\rm(iv)}] ${E}$ is s-compactly generated.
\end{enumerate}
\end{theorem}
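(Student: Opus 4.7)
The strategy is to use the MacNeille completion $\widehat E = MC(E)$ as a bridge. By \cite[Theorem 4.5]{ZR54}, $\widehat E$ is a complete lattice effect algebra containing $E$ as a join-dense and meet-dense sub-lattice effect algebra, and one checks that $\widehat E$ inherits block-finiteness and atomicity from $E$ and that the atoms of $\widehat E$ coincide with those of $E$: nonzero $\widehat E$-elements dominate nonzero $E$-elements by join-density, which in turn dominate $E$-atoms; conversely, no $\widehat E$-element can lie strictly between $0$ and an $E$-atom, again by join-density. Orthocomplementation in $\widehat E$ restricts to that in $E$. With these observations, (i) $\Leftrightarrow$ (ii) is immediate from Definition \ref{almoml}, since almost orthogonality is a condition on atoms and their orthocomplements; and (ii) $\Leftrightarrow$ (iii) is Theorem \ref{scomport} applied to the complete block-finite atomic lattice effect algebra $\widehat E$.

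For the equivalences involving (iv), the central auxiliary fact I will prove is: \emph{an element $u \in E$ is s-compact in $E$ if and only if $u$ is compact in $\widehat E$}. The forward direction uses meet-density of $E$ in $\widehat E$, which makes the join in $\widehat E$ of any $D \subseteq E$ equal to the meet of all $E$-upper bounds of $D$; the s-compactness hypothesis on $E$-upper bounds therefore translates into the compactness hypothesis $u \le \bigvee D$ computed in $\widehat E$. The reverse direction uses join-density: given $D' \subseteq \widehat E$ with $u$ below its $\widehat E$-join, replace $D'$ by the set $D$ of all $E$-elements lying below some member of $D'$, which has the same $\widehat E$-join and the same $E$-upper bounds; then s-compactness extracts a finite $F \subseteq D$, which is covered by a finite subset of $D'$ whose $\widehat E$-join dominates $u$.

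With this lemma in hand, (iii) $\Rightarrow$ (iv) proceeds as follows: Theorem \ref{scomport} upgrades (iii) to o-continuity of $\widehat E$, and Corollary \ref{fincomp} then makes every finite element of $\widehat E$ compact in $\widehat E$; since atoms of $\widehat E$ lie in $E$, these finite elements also lie in $E$ and are s-compact there by the lemma, while Theorem \ref{Theorem3.3} writes each $E$-element as a join of such finite elements. Conversely, (iv) $\Rightarrow$ (iii): every atom of $E$, being s-compact, is compact in $\widehat E$; Lemma \ref{compact}(i) applied iteratively inside $\widehat E$ propagates compactness to every finite element of $\widehat E$; and Theorem \ref{Theorem3.3} applied to the Archimedean atomic $\widehat E$ expresses each of its elements as a join of such compact finite elements. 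The main obstacle is precisely the s-compact/compact correspondence, which demands careful bookkeeping between joins in $E$ (which may fail to exist) and joins in $\widehat E$ (which always do), leveraging both density properties of the MacNeille completion in tandem; once this bridge is in place, the remaining implications assemble from earlier results of the paper.
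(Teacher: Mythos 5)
Your proposal is correct and follows essentially the same route as the paper: identify $E$ inside its MacNeille completion, observe that $E$ and $\widehat E$ share atoms (hence finite elements), use the equivalence ``$u$ is s-compact in $E$ iff $u$ is compact in $\widehat E$'' via join/meet-density, and invoke Theorem \ref{scomport} on $\widehat E$. You merely spell out more explicitly than the paper does how the s-compact/compact correspondence, restricted to finite elements, transfers the generation property between $E$ and $\widehat E$ in both directions.
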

\begin{proof} By J.~Schmidt \cite{schmidt} a {MacNeille completion} $\widehat{E}$
of  $E$ is (up to isomorphism) a complete lattice such that  for every
element $x\in \widehat{E}$ there exist $P,Q\subseteq E$ such that
$x=\bigvee_{\widehat{E}} P=\bigwedge_{\widehat{E}} Q$ (taken in $_{\widehat{E}}$).
Here we identify $E$ with
$\varphi(E)$, where $\varphi:E\to {\widehat{E}}$ is the embedding
(meaning that $E$ and $\varphi(E)$ are isomorphic lattice effect algebras).
It follows that $E$ and $ {\widehat{E}}$ have the same set of all atoms and coatoms
and hence also the same set of all finite and cofinite elements,
which implies that
\mbox{\rm(i)} $\Longleftrightarrow $ \mbox{\rm(ii)}.

Moreover, for any $A\subseteq E$ and $u\in E$, we have
($d\in E$, $A\leq d$ implies  $u\leq d$) iff $u\leq \bigvee_{\widehat{E}} A$.
Then $u$ is s-compact in $E$ iff $u$ is compact in $\widehat{E}$,
which gives \mbox{\rm(iii)} $\Longleftrightarrow $ \mbox{\rm(iv)}.

Finally \mbox{\rm(ii)} $\Longleftrightarrow $ \mbox{\rm(iii)} by
Theorem \ref{scomport}.
\end{proof}

\begin{definition}{ \label{Dstate}
Let $E$ be an effect algebra. A map $\omega:E\to[0,1]$ is called a
{\em state} on $E$ if $\omega(0)=0$, $\omega(1)=1$ and $\omega(x\oplus
y)=\omega(x)+\omega(y)$ whenever $x\oplus y$ exists in $E$.}
\end{definition}

\begin{theorem}\label{statesmt} {\rm{}(State smearing theorem for
almost orthogonal block-finite Archimedean
atomic lattice effect algebras)}
Let $(E;\oplus,0,1)$ be a block-finite Archimedean atomic lattice
effect algebra. If $E$ is almost orthogonal then:
\begin{enumerate}
\item[\hbox{\rm{(i)}}] $E_1=\{x\in E\mid x\ \mbox{or}\ x'\ \mbox{is finite}\}$
is a sub-lattice effect algebra of $E$.
\item[\hbox{\rm{(ii)}}] If there exists an $(o)$-continuous state $\omega$ on
$E_1$ (or on $S(E_1)=S(E)\cap E_1$, or on $S(E)$) then
there exists an $(o)$-continuous state $\widetilde{\omega}$
on $E$ extending $\omega$ and an $(o)$-continuous state $\widehat{\omega}$
on $\widehat{E}=MC(E)=MC(E_1)$ extending $\widetilde{\omega}$.
\end{enumerate}
\end{theorem}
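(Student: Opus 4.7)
For part (i), I begin by noting $0, 1 \in E_1$ and that $E_1$ is closed under $'$ by its very definition. For closure under $\oplus$, given $a, b \in E_1$ with $a \oplus b$ defined, I split into cases: if both $a, b$ are finite, then $a \oplus b$ is a finite $\oplus$-orthogonal sum of atoms, hence finite; otherwise at least one of $a, b$, say $a$, is cofinite, so $a'$ is finite and $(a \oplus b)' = a' \ominus b \leq a'$ lies below a finite element. The key auxiliary fact I then need is that any element below a finite element is itself finite. Using Theorem \ref{Theorem3.3} together with almost orthogonality and the block-finite hypothesis, the atomic decomposition of any $v \leq u$ involves only atoms drawn from the finitely many blocks meeting $u$ and, by almost orthogonality within each such block, from only finitely many atoms non-orthogonal to the atoms occurring in the decomposition of $u$. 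Closure under $\vee$ and $\wedge$ then follows from de Morgan together with this finiteness argument.

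Part (ii) proceeds in three stages. \emph{Stage 1: extend $\omega$ from $S(E)$ or $S(E_1)$ to $E_1$.} For each atom $a \in E$ one has $n_a a \in S(E)$, so set $\omega(ka) := \frac{k}{n_a}\,\omega(n_a a)$ for $1 \leq k \leq n_a$, extend to all finite elements by additivity along the decomposition of Theorem \ref{Theorem3.3}, and then to cofinite elements by $\omega(u) := 1 - \omega(u')$. Well-definedness follows from the uniqueness part of Theorem \ref{Theorem3.3}, and $(o)$-continuity transfers through these finite rational combinations. \emph{Stage 2: extend $\omega$ from $E_1$ to $E$.} By Theorem \ref{Theorem3.3}, every $x \in E$ is a join of finite elements of $E_1$ below it and a meet of cofinite elements of $E_1$ above it. Define
\[
\widetilde{\omega}(x) := \sup\{\omega(u) \mid u \in E_1,\ u \leq x,\ u \text{ finite}\},
\]
verify coincidence with the analogous infimum over cofinite upper bounds in $E_1$, and then derive additivity on $E$ via two-sided approximation by the atomic decomposition. \emph{Stage 3: extend to $\widehat{E} = MC(E)$.} By \cite[Theorem 4.5]{ZR54}, $\widehat{E}$ carries a lattice effect algebra structure in which $E$ is both join- and meet-dense; since $E_1$ is join- and meet-dense in $E$ by Theorem \ref{Theorem3.3}, transitivity gives $MC(E_1) = MC(E) = \widehat{E}$, and $\widehat{\omega}$ is defined by the analogous sup/inf formula over $E$.

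The main obstacle, in my view, lies in Stage 2: proving both that the sup- and inf-definitions of $\widetilde{\omega}$ coincide and that the resulting extension is $(o)$-continuous at an arbitrary $x \in E$. Given a net $x_\alpha \uparrow x$ in $E$, one needs to combine two $(o)$-continuity ingredients --- that of $E$ itself (Theorem \ref{compger}), used to approximate $x_\alpha$ by finite elements of $E_1$ coherently with a given $\varepsilon$-approximation $u \leq x$, and that of $\omega$ on $E_1$, used to control $\omega$-values along these approximations --- via a diagonal argument matched by the symmetric argument from above through cofinite elements. Once Stage 2 is secured, additivity of $\widetilde{\omega}$ and $(o)$-continuity of $\widehat{\omega}$ reduce to routine approximation arguments using the density and compact-generation tools already developed in the paper.
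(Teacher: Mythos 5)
Your proposal diverges completely from the paper's proof, which is a two-line reduction: by Theorem \ref{equivscomport} almost orthogonality of a block-finite Archimedean atomic $E$ is equivalent to $E$ being s-compactly generated, and then (i) and (ii) are exactly \cite[Theorem 2.7]{PR3} and \cite[Theorem 4.2]{PR3} for s-compactly generated algebras. You instead try to reprove those cited results from scratch, and the reconstruction has genuine gaps. The most concrete one is in part (i): closure of $E_1$ under $\vee$. Your ``de Morgan plus downward-closedness'' argument handles $a\wedge b$ for $a,b$ finite (since $a\wedge b\le a$) and, dually, the join of two cofinite elements, but for two \emph{finite} elements $a,b$ de Morgan reduces $a\vee b$ to $(a'\wedge b')'$, a meet of two \emph{cofinite} elements, which is not below any finite element --- the argument is circular there. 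The correct route is via compactness: write $a\vee b=\bigvee_\alpha k_\alpha c_\alpha$ by Theorem \ref{Theorem3.3}, use that $a$ and $b$ are (s-)compact (Theorem \ref{compger}, Corollary \ref{fincomp}) to extract a finite subfamily dominating both, and conclude $a\vee b$ is a finite $\oplus$-sum. This is precisely the point at which s-compact generation, which your plan never invokes, is doing the work in the paper.

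In part (ii), Stage 1 asserts that well-definedness of $\omega(u)=\sum \frac{k_i}{n_{a_i}}\omega(n_{a_i}a_i)$ ``follows from the uniqueness part of Theorem \ref{Theorem3.3}''; but Theorem \ref{Theorem3.3} as stated asserts existence of such a decomposition, not its uniqueness, and showing that two different atomic decompositions of the same finite element yield the same value is the substantive content of the smearing construction in \cite{ZR65} and \cite{PR3}, not a routine verification. Stage 2 you yourself flag as the main obstacle and leave as a programme (coincidence of the sup- and inf-definitions, additivity, and $(o)$-continuity of $\widetilde{\omega}$); this is again the content of \cite[Theorem 4.2]{PR3}. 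So while your overall architecture mirrors how the cited results are actually proved, as submitted the argument is not a proof: it omits the reduction to s-compact generation that the paper relies on and leaves the two hardest verifications (join-closure of finite elements and well-definedness plus extension of the state) unestablished.
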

\begin{proof} \mbox{\rm{(i)}} By Theorem \ref{equivscomport},
${E}$ is s-compactly generated and thus by \cite[Theorem 2.7]{PR3}
$E_1$ is a sub-lattice effect algebra of $E$.

\noindent{}\mbox{\rm{(ii)}} Since ${E}$ is s-compactly generated, we obtain the
existence of $(o)$-continuous extensions $\widetilde{\omega}$
on $E$ and  $\widehat{\omega}$ on $\widehat{E}$ by \cite[Theorem 4.2]{PR3}.
\end{proof}

\end{document}